\documentclass[letterpaper, 10 pt, conference]{ieeeconf}  

\IEEEoverridecommandlockouts                              
\let\labelindent\relax

\usepackage{graphicx}
\usepackage{amsmath,amssymb,amsfonts,mathabx}
\usepackage{float}
\usepackage{cite}
\usepackage[colorlinks = true, linkcolor = blue, citecolor =
blue]{hyperref}
\usepackage{enumitem}

\graphicspath{{../}}

\newcommand{\real}{\ensuremath{\mathbb{R}}}
\newcommand{\realn}{\ensuremath{\mathbb{R}^n}}

\newcommand{\realnonneg}{\ensuremath{\mathbb{R}_{\ge 0}}}
\newcommand{\identityn}{\mathbf{I}_n}

\newcommand{\nat}{\ensuremath{\mathbb{N}}}

\newcommand{\Mc}[1]{\mathcal{#1}}
\newcommand{\Vi}{v_i}
\newcommand{\boldv}{\mathbf{v}}
\newcommand{\V}{v}

\newcommand{\agt}{\mathcal{V}} 
\newcommand{\wik}{w_{ik}}

\newcommand{\wij}{w_{ij}}
\newcommand{\wijk}{w_{ij}^{(k)}}

\newcommand{\yij}{y_{ij}}

\newcommand{\boldxi}{\mathbf{x}_i}
\newcommand{\boldzi}{\mathbf{z}_i}
\newcommand{\bolddi}{\mathbf{d}_i}
\newcommand{\boldqi}{\mathbf{q}_i}
\newcommand{\boldwi}{\mathbf{w}_i}
\newcommand{\boldwik}{\mathbf{w}_{i_k}}
\newcommand{\boldyi}{\mathbf{y}_i}
\newcommand{\boldwistar}{\mathbf{w}_i^*}

\newcommand{\boldw}{\mathbf{w}}
\newcommand{\boldx}{\mathbf{x}}
\newcommand{\boldy}{\mathbf{y}}
\newcommand{\boldwstar}{\mathbf{w}^*}

\newcommand{\boldwnoti}{\mathbf{w}_{-i}}
\newcommand{\boldwnotik}{\mathbf{w}_{-i_k}}
\newcommand{\boldwnotistar}{\mathbf{w}_{-i}^*}

\newcommand{\grph}{\mathsf{G}}
\newcommand{\negrph}{\mathsf{G}^*}
\newcommand{\grphunweighted}{\mathsf{G}^\dagger}
\newcommand{\edg}{\mathcal{E}}
\newcommand{\edgunweighted}{\mathcal{E}^\dagger}
\newcommand{\adjmat}{\mathbf{A}}
\newcommand{\adjmatunweighted}{\mathbf{A}^\dagger}
\newcommand{\Ni}{\mathcal{N}_{i}}
\newcommand{\Ninoti}{\mathcal{N}_{i}\setminus\{i\}}

\newcommand{\agtsumjni}{\sum_{j \in \Ni}}

\newcommand{\agtsumj}{\sum_{j \in \agt}}

\newcommand{\constraintseti}{\mathcal{K}_i}
\newcommand{\constraintsetik}{\mathcal{K}_{i_k}}
\newcommand{\constraintset}{\mathcal{K}}
\newcommand{\constraintsetnoti}{\mathcal{K}_{-i}}
\newcommand{\katzscorei}{c_i}
\newcommand{\katzscoreik}{c_{i_k}}
\newcommand{\katzscorej}{c_j}
\newcommand{\katzscorek}{c_k}
\newcommand{\katzscorestacked}{\mathbf{c}}
\newcommand{\agtsumjniarg}[1]{\sum_{j \in \Ni({#1})}}
\newcommand{\inftysumk}{\sum_{k=1}^{\infty}}
\newcommand{\inftysumm}{\sum_{m=1}^{\infty}}
\newcommand{\ntwgame}{\mathcal{G}}
\newcommand{\neset}{\mathcal{NE}}
\DeclareMathOperator*{\argmax}{argmax}

\newcommand{\boldone}{\mathbf{1}}
\newcommand{\boldzero}{\mathbf{0}}
\newcommand{\boldei}{\mathbf{e}_i}
\newcommand{\boldej}{\mathbf{e}_j}

\newcommand{\mwalksumjtonoti}{S'_{j,m,i}}
\newcommand{\mwalksumjtoi}{S_{j,m,i}}
\newcommand{\pjiboldwnoti}{p_{ji}(\boldwnoti)}
\newcommand{\qjiboldwnoti}{q_{ji}(\boldwnoti)}
\newcommand{\bestrespseti}{\mathcal{BR}_i(\boldwnoti)}
\newcommand{\betterrespseti}{\mathcal{R}_i(\boldwi,\boldwnoti)}
\newcommand{\strictbetterrespseti}{\mathcal{R}_i^s(\boldwi,\boldwnoti)}
\newcommand{\diag}{\mathsf{diag}}
\newcommand{\supp}{\textbf{supp}}
\newcommand{\thmtitle}[1]{\mbox{}\textit{(\textbf{#1}.)}}

\newcommand{\remend}{\relax\ifmmode\else\unskip\hfill\fi\hbox{$\bullet$}}

\newcommand{\bulletsym}{\hbox{$\bullet$}}
\newcommand{\bulletend}{\relax\ifmmode\else\unskip\hfill\fi\bulletsym}
\newcommand{\squaresym}{\hbox{$\blacksquare$}}
\newcommand{\proofend}{\relax\ifmmode\else\unskip\hfill\fi\squaresym}
\newcommand{\trianglesym}{\hbox{$\blacktriangle$}}
\newcommand{\egend}{\relax\ifmmode\else\unskip\hfill\fi\trianglesym}

\renewenvironment{proof}{\textit{Proof:} }{\proofend}

\newtheorem{theorem}{\textbf{Theorem}}[section]
\newtheorem{corollary}[theorem]{\textbf{Corollary}}
\newtheorem{lemma}[theorem]{\textbf{Lemma}}

\newtheorem{remark}[theorem]{\textbf{Remark}}

\newcommand{\tth}{^{\text{th}}}

\usepackage{tcolorbox}

\renewcommand{\baselinestretch}{0.994}
\title{\LARGE \bf
A Network Formation Game for Katz Centrality Maximization: A Resource Allocation Perspective
}

\author{Balaji R$^{1}$, Prashil Wankhede$^{1}$ and Pavankumar Tallapragada$^{1}$
\thanks{ This work was partially supported by Science and Engineering Research Board under grant CRG/2023/008573.} \thanks{$^1$ Balaji R, Prashil Wankhede and Pavankumar Tallapragada are with the Indian Institute of Science, Bengaluru, India. \{\tt\small \{rbalaji, prashilw, pavant\}@iisc.ac.in\}}} 

\begin{document}

\maketitle
\thispagestyle{empty}
\pagestyle{empty}

\begin{abstract}
  In this paper, we study a network formation game in which agents seek to maximize their influence by allocating constrained resources to choose connections with other agents. In particular, we use Katz centrality to model agents' influence in the network. Allocations are restricted to neighbors in a given unweighted network encoding topological constraints. The allocations by an agent correspond to the weights of its outgoing edges. Such allocation by all agents thereby induces a network. This models a strategic-form game in which agents’ utilities are given by their Katz centralities. We characterize the Nash equilibrium networks of this game and analyze their properties. We propose a sequential best-response dynamics (BRD) to model the network formation process. We show that it converges to the set of Nash equilibria under very mild assumptions. For complete underlying topologies, we show that Katz centralities are proportional to agents’ budgets at Nash equilibria. For general underlying topologies, in which each agent has a self-loop we show that hierarchical networks form at Nash equilibria. Finally, simulations illustrate our findings.
\end{abstract}
\begin{keywords}
	Network formation games, Network Centrality maximization, Resource allocation, Best response dynamics.
\end{keywords}

\section{INTRODUCTION}

Many real-world systems are fundamentally networked, with outcomes determined not only by agents’ actions but also by the structure of interactions among them; importantly, this structure often emerges endogenously through strategic decisions. Network formation games provide a natural framework to study how self-interested agents choose connections and allocate limited resources to maximize utility, which may depend on access to information, influence, or resources. This perspective is particularly relevant in social and information networks, financial systems, and communication or collaboration networks, where links are formed under costs and constraints. In this work, we model strategic network formation in which agents allocate limited resources to form connections so as to maximize their influence in the network, measured via Katz centrality.

\subsubsection*{Literature review}
Network formation games have been widely studied over the years across a variety of application domains. For instance, the work~\cite{JacksonWolinsky1996} analyzes the stability of social and economic networks from a game theoretic perspective. A non-cooperative model of social network formation, in which agents form connections based on a trade-off between the rewards and costs of forming and severing links, was proposed in~\cite{BalaGoyal2000}. Research~\cite{cisneros2021network} proposes a network formation game to study the emergence of hierarchical networks in groups contaning \emph{consensual} and \emph{non-consensual} agents. The paper~\cite{2005_Jackson_survey} surveys models of undirected
network formation and also studies their structure and efficiency. The works~\cite{BlochJackson2006,CalvoArmengol2009} study \emph{pairwise stability} of Nash equilibirium in network formation games.

Network centrality measures are used to quantify the importance or influence of individual nodes within a network. We refer the reader to \cite{2010_NL_BF_JH, 2023_FB_OMJ_PT} for definitions and introductions to commonly used centrality measures. The work in \cite{Salonen2016} considers a network formation game, wherein each player's utility function is a weighted sum of ``Cobb-Douglas'' functions and the weights are commonly agreed valuations of the players. For this game, the paper analyzes the relationship between the Nash equilibria and various centrality measures. In the present work, we consider the well-known \emph{Katz} centrality, originally proposed in \cite{Katz1953}, to model agents' utilities in the network formation game. Katz centrality finds widespread application in various domains, such as influence maximization in social networks \cite{2020_AS_BM}, consensus protocols \cite{2018_MP_OK_JH}, opinion dynamics \cite{2020_SD_WA_TJ_EA}, the characterization of Nash equilibria in Cournot games \cite{2019_KB_SE_RI}, and online social networks \cite{2015_AM_EA_YH}. Furthermore, a control-theoretic perspective of Katz centrality was proposed in \cite{2017_KS}.

Several works adopt a centrality-maximization perspective in network formation games. The paper~\cite{2012_RT_CG} proposes a game-theoretic model in which each agent aims to maximize its relative Katz centrality and the size of the network, while incurring costs for link formation. The work~\cite{Bei2011} introduces a network formation model where each agent seeks to maximize its betweenness centrality subject to budget constraints. The work in~\cite{LAOUTARIS20141266} proposes a game, in which each agent purchases outgoing links under a budget constraint to minimize the sum of preference-weighted distances to other nodes. The paper~\cite{cisneros2021network} studies a network formation game in which agents' utilities depend on their (out) degree centralities.

A dynamic model of network formation, in which agents seek to maximize their Bonacich centrality and which converges to nested split graphs, was proposed in~\cite{Konig2010}. The work~\cite{Castaldo2020} employs Best-Response Dynamics (BRD) to analyze network formation among agents aiming to maximize their Bonacich centrality. The paper~\cite{Catalano2024} proposes a finite potential game for PageRank centrality maximization, in which asynchronous BRD converges to a Nash equilibrium in finite time. The work~\cite{Feldman2020} studies the effect of deviator rules on the efficiency of Nash equilibria reached under BRD in network formation games. Research~\cite{bazenkov2015} uses the so-called \emph{double best-response} dynamics to model network formation in wireless networks, showing that it generates more efficient networks compared to algorithms based on standard best-response dynamics.

On a slightly related note, papers such as~\cite{BarabasiAlbert1999, Watts2001,JacksonWatts2002,Snijders2001,Snijders2010} model network formation using random processes, without the context of a game, and analyze properties of network topologies that arise from such processes.

\subsubsection*{Contributions}
\begin{enumerate}
\item To the best of our knowledge, our work is the first to model a network formation game for Katz centrality maximization. We view the game as one of strategic resource allocation by the agents. Each agent can form outward weighted links only to the agents that are out-neighbors in an underlying graph topology. The weight allocations by an agent are also subject to a hard budget constraint. We also use the solution concept of Nash equilibria which is more robust than pairwise stability.

\item We show the mutual reinforcement property of the Katz centrality, that is better responses of an agent do not reduce the centrality of other agents in the network. Hence, unilateral better responses at Nash equilibria are still Nash equilibria. This property also results in all Nash equilibria having the same agent centralities. We also show that there are Nash equilibrium networks that are sparse, that is where every agent has exactly one outgoing link. Under complete underlying topologies, we show that Nash equilibrium centralities are proportional to the agent's budgets. When the underlying topology allows for self-loops, agents influence neighbours with higher centrality at Nash equilibrium. Hence a hierarchical structure emerges in the condensation graph of Nash equilibria where all agents in a strongly connected component have the same centrality and the sinks have the highest centralities, thereby, containing the most influential agents in the network.

\item Finally, we propose a best response dynamics for this game and show that it converges to the set of Nash equilibria. The convergence is aided by the mutual reinforcement property and the fact that best responses result in networks where agents exhaust their bounded budgets on neighbours with the highest centrality.

\item Comparatively, in~\cite{2012_RT_CG}, the players seek to maximize their relative Katz centrality and the size of the network and incur a cost for formation of unweighted links. Then, the paper does a static equilibrium analysis of the game using the notion of pairwise stability. \cite{Salonen2016} studies a network formation game and analyzes various centralities (including Katz centrality) of the nodes in the equilibrium networks. Literature contains works wherein there is a cost to link formation, such as in~\cite{cisneros2021network, 2012_RT_CG} or a constraint on the number of links a player can form as in~\cite{Catalano2024}. The budget constraints on link formation in our work are similar to those in~\cite{Salonen2016, Bei2011, LAOUTARIS20141266}. To the best of our knowledge, there is no previous work in the literature that imposes a constraint on which other agents an agent could form links with.
\end{enumerate}

\subsubsection*{Notation}
Throughout the paper, we use non-bold letters for denoting scalars,
bold lowercase letters for denoting vectors, and bold uppercase
letters for denoting matrices. The sets of natural numbers, real
numbers, non-negative real numbers and positive real numbers are
denoted by $\mathbb{N},\real$, $\real_{\geq0}$ and $\real_{>0}$,
respectively. For any vectors $\boldw \in \realn$ and $\mathbf{x}_i \in \realn$, $w_j$ and $x_{ij}$ denote their $j$th elements, respectively and $\supp( \mathbf{x}_i ) := \{ j \in \{1, \ldots, n\} , | , x_{ij} \neq 0 \}$.
Let $\boldzero$ and $\boldone$ denote the vectors (of appropriate dimension) with all \emph{zero} and all \emph{one} elements, respectively. Let $\identityn$ denote the $n$ dimensional identity matrix. For any matrix $\adjmat \in \real^{n \times n}$, $\rho(\adjmat) \in \realnonneg$ denotes its spectral radius. For any vector $\boldw \in \realn$, $\boldw^\top$
denotes its transpose and $\diag(\boldw)$ denotes a diagonal matrix with $\boldw$ as its main diagonal. Let $\boldei$ denote the $i^{\mathsf{th}}$ standard basis vector of $\realn$. For any $a \in \real$, $|a|$ denotes its absolute value. For a collection of sets $\{ \Mc{S}_{i} \}_{i \in \{1, \dots, n\}}$, their Cartesian product is given by $\bigtimes_{i=1}^{n} \Mc{S}_{i}$. The empty set is denoted by $\varnothing$. \remend 

\subsubsection*{Basic graph theory} Let $\grph := (\agt, \edg, \adjmat)$ denote an arbitrary graph (or network), where $\agt$ is the set of nodes, 
$\edg$ is the set of edges, and $\adjmat \in \real^{n \times n}$ is the corresponding 
adjacency matrix with row $i$ and column $j$ entry $\wij \in \real$ denoting the edge weights. The graph $\grph$ is said to be 
\emph{directed} or digraph if the adjacency matrix is not necessarily symmetric, whereas it is 
\emph{undirected} if $\adjmat = \adjmat^\top$. For any $\adjmat \in \real^{n \times n}$, a directed edge from node $i$ to node $j$ exists, denoted by $(i,j) \in \edg$, if and only if $\wij \neq 0$. For undirected graphs $(i,j)\in \edg \Leftrightarrow (j,i) \in \edg$. A graph $\grph$ is \emph{weighted} if the entries 
of $\adjmat$ can take arbitrary real values. It is 
\emph{unweighted} if $\adjmat$ is a binary matrix with elements in $\{0,1\}$ such that 
$(i,j) \in \edg$ if and only if $\wij = 1$. For any graph $\grph$, the \emph{out-neighbor} set of any agent $i \in \agt$ is denoted by $\Ni(\grph) := \{j \in \agt \mid \wij \neq 0\} \subseteq \agt$. In a
graph $\grph$, a \emph{(directed) walk} of length $(l-1)$ from a node
$i_{1} \in \agt$ to any node $i_{l} \in \agt$ is a sequence of nodes
$i_{1}\mapsto i_{2}\mapsto \ldots \mapsto i_{l}$ such that
$(i_s, i_{s+1}) \in \edg, \ \forall s \in \{ 1, 2, \ldots, l-1 \}$. A walk is said to be \emph{simple} if no node appears more than once, except possibly when the initial node coincides with the terminal node; in this case, the walk is called a \emph{cycle}. An undirected graph is \emph{connected} if there exists a walk between any two nodes. A digraph $\grph^\prime =(\agt^\prime, \edg^\prime)$ is a \emph{subgraph} of a digraph $\grph =(\agt, \edg)$ if $\agt^\prime \subseteq \agt$ and $\edg^\prime \subseteq \edg$. The subgraph of $(\agt, \edg)$ \emph{induced} by $\agt^\prime\subseteq \agt$ is the digraph $(\agt^\prime, \edg^\prime)$, where $\edg^\prime$ contains all edges in $\edg$ between two nodes in $\edg^\prime$. A digraph $\grph$ is said to be \emph{strongly connected} if there exists a directed walk from any node to any other node. It is said to be \emph{weakly connected} if the undirected version of the digraph is connected. A subgraph $\grph^\prime$ of $\grph$ is called a \emph{Strongly Connected Component} (SCC) if $\grph^\prime$ is strongly connected and no subgraph of $\grph$ that strictly contains $\grph^\prime$ is strongly connected. A Weakly Connected Component (WCC) is defined similarly. \remend

\section{Modeling and Problem Setup}\label{sec:prob}

Consider a set $\agt := \{1,\ldots,n\}$ of $n$ agents that seek to maximize their influence in the network by choosing their social connections subject to a fixed resource budget. We first introduce the resource constraints. We then define an agent's \emph{Katz} centrality, which represents the agent's overall influence in the network. Finally, we describe the resulting network formation game and the process of network formation itself.

\subsection*{Resource allocation profiles, budget constraints, underlying topology and agents' Katz centrality}

We model the action or strategy of an agent in the network formation game as one of allocating a limited resource to the weights of its outgoing edges in the network. In particular, we denote the allocation profile of any agent $i \in \agt$ by $\boldwi := [ w_{i1}, \ldots, w_{in}]^\top \in \realn$, where the element $w_{ij}$ represents the resource (such as time, money, etc.) allocated by agent $i$ to agent $j$. Let $\boldw := [\boldw_1^\top, \ldots, \boldw_n^\top]^\top \in \real^{n \times n}$ denote the allocation profile of all agents with $\boldwi$ as the $i\tth$ subvector. When, we want to view the allocation profile from the perspective of agent $i \in \agt$ we write it as $\boldw = (\boldwi,\boldwnoti) \in \real^{n \times n}$, where $\boldwi \in \real^{n}$ is the allocation profile of agent $i \in \agt$ and $\boldwnoti \in \real^{n (n-1)}$ is the allocation profile of all agents other than $i$.

The allocation profile $\boldw$ induces a weighted graph $\grph(\boldw)= \grph = (\agt, \edg, \adjmat)$ with the adjacency matrix $\adjmat( \boldw ) := [\boldw_1, \ldots, \boldw_n]^\top \in \real^{n \times n}$. Each agent has a limited budget $B_i > 0$ on the total resources that they can allocate. We further assume that the agents are constrained to allocate only to their social out-neighbors $\Ni(\grphunweighted)$ in an unweighted digraph, called the \textit{underlying topology} $\grphunweighted = (\agt, \edgunweighted, \adjmatunweighted)$. This models topological constraints on agents such as communication limitations, geographical proximity, or pre-existing social relationships. 
Formally, the resource allocation constraint set for any agent $i \in \agt$ is given by
\begin{align}
  &\constraintseti(\grphunweighted) := \notag
  \\
  &\left\{\boldwi \in \real^n_{\geq0} \mathrel{\bigg|} \supp(\boldwi) \subset \Ni, \  \agtsumjniarg{\grphunweighted}\wij \leq B_i  \right\}. \label{eq:agent_constraint_set}
\end{align}
Also, let $\constraintset(\grphunweighted) := \bigtimes_{i=1}^{n} \constraintseti(\grphunweighted) \subset \real^{n^{2}}_{\geq 0}$. We call $\boldwi \in \constraintseti(\grphunweighted)$ a \emph{feasible} allocation of $i$. 
Henceforth, we omit the arguments of $\grph$, $\adjmat$, $\Ni$, $\constraintseti$ and $\constraintset$ for brevity whenever no confusion arises. Note that, for any feasible allocation profile $\boldw \in \constraintset$, the unweighted version of the graph $\grph(\boldw)$ is a subgraph of the unweighted graph $\grphunweighted$. The allocation profile $\boldw$ then determines the network centralities of the agents, which we introduce next.

The \emph{Katz} centrality of any agent $i \in \agt$ in a network $\grph( \boldw )$ induced by an allocation profile $\boldw = (\boldwi,\boldwnoti) \in \real^{n \times n}$ is
\begin{equation}
  \label{eq:katz_score}
 \katzscorei(\boldw) = \katzscorei(\boldwi, \boldwnoti) := \inftysumk \agtsumj \delta^k \wijk,
\end{equation}
where $\delta \in (0, 1/\rho( \adjmat(\boldw) ) )$ is the discount factor and $\wijk$ is the $ij\tth$ element of $\adjmat^k(\boldw)$, where recall that $\adjmat( \boldw )$ is the adjacency matrix of the weighted graph
$\grph( \boldw )$. An agent $i$'s Katz centrality~\eqref{eq:katz_score} is the discounted sum of all (weighted) directed walks emanating from it in the network $\grph$.

Intuitively, if $\wij$ measures the \emph{direct} influence of $i$ on $j$, and $\wijk$ denotes the $k$-hop \emph{indirect} influence of $i$ on $j$, then the Katz centrality in~\eqref{eq:katz_score} quantifies the total influence of $i$ on all agents in $\agt$ by aggregating both direct and indirect influences. Here, the indirect influence is propagated through walks of arbitrary length, with longer walks discounted according to the discount factor $\delta$. Moreover, $\delta \in (0, 1/\rho( \adjmat ) )$ ensures that the infinite series in~\eqref{eq:katz_score} converges.

We impose the following standing assumptions on the underlying topology $\grphunweighted$, resource parameters $B_i$'s and the discount factor $\delta$, and justify it in the discussion that follows.

\begin{enumerate}[label=\textbf{(SA\arabic*)},wide=\parindent] 
\item The underlying topology $\grphunweighted$ is such that $\Ni(\grphunweighted)\neq \varnothing, \forall i \in \agt$.
	\label{stand_asmp:weak_conn_of_underlying_graph}
	\remend
\end{enumerate}	

\begin{enumerate}[resume,label=\textbf{(SA\arabic*)},wide=\parindent] 
\item $B_i < 1, \forall i \in \agt$ and $\delta = 1$ in~\eqref{eq:katz_score}.
  \label{stand_asmp:budget_and_discount_factor}
  \remend
\end{enumerate}

Both~\ref{stand_asmp:weak_conn_of_underlying_graph} and~\ref{stand_asmp:budget_and_discount_factor} are made without loss of generality. If $\Ni(\grphunweighted)=\varnothing$ then $\constraintseti=\varnothing$ i.e., there exists no feasible allocation for $i$. To show that there is no loss of generality in the assumption~\ref{stand_asmp:budget_and_discount_factor}, let $B_{i} > 0$ be the budget of agent $i \in \agt$. Given any allocation profile $\boldw \in \constraintset$, consider the allocation profile $\overline{ \boldw } := \delta \boldw$, with $(1/\delta) > \max_{i \in \agt} \{ B_i\}$. Thus, $\adjmat( \overline{\boldw} ) = \delta \adjmat(\boldw)$. Then, letting $\katzscorestacked(\boldw):= \left[c_1(\boldw), \ldots, c_n(\boldw)\right]^\top$ denote the stacked vector containing the centralities of all agents, note that
\begin{equation}
  \katzscorestacked(\boldw) = \inftysumk \delta^k \adjmat^k(\boldw) \boldone = \inftysumk \adjmat^k( \overline{\boldw} ) \boldone. \label{eq:vector-katz}
\end{equation}
Moreover, the allocation profile $\overline{\boldw} = \delta \boldw$, which induces $\bar{\adjmat}$, satisfies the constraints in~\eqref{eq:agent_constraint_set} with resource budget parameters $\overline{B}_i = \delta B_i < 1$. Due to this equivalence, there is no loss of generality in the Assumption~\ref{stand_asmp:budget_and_discount_factor}.

Viewing centralities of all agents together, as in~\eqref{eq:vector-katz}, immediately yields the following result, which is useful in the subsequent analysis.
\begin{lemma}
  \thmtitle{Interdependence between agent centralities}
  \label{lem:interdependence_betwn_katz_scores}
  Consider the Katz centralities defined in~\eqref{eq:katz_score}. Let $\boldw \in \constraintset$ be a feasible allocation profile, satisfying~\eqref{eq:agent_constraint_set}. Then, the Katz centrality of any agent $i \in \agt$ in $\grph( \boldw )$ satisfies
  \begin{equation}
    \label{eq:interdependence_betwn_katz_scores}
    \katzscorei(\boldw) = \agtsumjniarg{\grphunweighted}\wij\left[1 + \katzscorej(\boldw)\right].
  \end{equation}	
\end{lemma}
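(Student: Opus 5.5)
The plan is to work with the vector form~\eqref{eq:vector-katz} with $\delta = 1$ (by~\ref{stand_asmp:budget_and_discount_factor}), namely $\katzscorestacked(\boldw) = \inftysumk \adjmat^k(\boldw)\boldone$, and to peel off one factor of $\adjmat$ from this series. First I would confirm that the series converges. Every $\boldw \in \constraintset$ has nonnegative entries, and each row sum of $\adjmat(\boldw)$ is at most $B_i < 1$. Hence $\rho(\adjmat) \le \|\adjmat\|_\infty \le \max_i B_i < 1$, so $\delta = 1 \in (0, 1/\rho(\adjmat))$ and the Neumann-type series converges absolutely. In particular $\katzscorestacked(\boldw)$ is finite, and the series may be regrouped freely.

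The key step is the reindexing
\[
\katzscorestacked = \adjmat\boldone + \sum_{k=2}^{\infty}\adjmat^k\boldone = \adjmat\boldone + \adjmat\inftysumk \adjmat^{k}\boldone = \adjmat\left(\boldone + \katzscorestacked\right).
\]
Pulling $\adjmat$ out of the infinite sum is justified because matrix multiplication is continuous, so it commutes with the limit of the partial sums. Reading off the $i\tth$ row then gives $\katzscorei(\boldw) = \agtsumj \wij\,[1 + \katzscorej(\boldw)]$.

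The last step restricts the sum to $j \in \Ni(\grphunweighted)$. Feasibility~\eqref{eq:agent_constraint_set} gives $\supp(\boldwi) \subset \Ni(\grphunweighted)$, so $\wij = 0$ for every $j \notin \Ni(\grphunweighted)$, and those terms vanish. This yields~\eqref{eq:interdependence_betwn_katz_scores}.

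There is no serious obstacle here. The only point requiring care is the convergence and spectral-radius bound, which legitimizes the series manipulation. If one prefers to avoid limits, an equivalent route is to write $\katzscorestacked = (\identityn - \adjmat)^{-1}\adjmat\boldone$ and multiply through by $(\identityn - \adjmat)$.
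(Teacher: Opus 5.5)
Your proposal is correct and is essentially the paper's argument. Both establish $\katzscorestacked(\boldw) = \adjmat\left(\boldone + \katzscorestacked(\boldw)\right)$ and read off the $i$th row; the paper reaches that identity through the closed form $(\identityn - \adjmat)^{-1}\adjmat\boldone$ (your suggested alternative) rather than by reindexing the series, and your bound $\rho(\adjmat) \le \max_i B_i < 1$ and your use of $\supp(\boldwi) \subset \Ni(\grphunweighted)$ to restrict the sum simply spell out details the paper leaves implicit.
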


\begin{proof}
  Let $\adjmat = \adjmat( \boldw )$. Then, from~\eqref{eq:vector-katz} and from Assumption~\ref{stand_asmp:budget_and_discount_factor}, we can easily verify that
  $\katzscorestacked(\boldw)=\left(\identityn - \adjmat\right)^{-1}\adjmat \boldone$. Thus, $ \katzscorestacked(\boldw) = \adjmat [ \boldone + \katzscorestacked(\boldw) ]$. The claim now
  follows from the element-wise equalities.
\end{proof}

\vspace{5pt}

\subsubsection*{Network formation game}

In this paper, we consider the \emph{strategic form game}
$\ntwgame(\grphunweighted) :=\langle \agt, \left(\constraintseti\right)_{i \in \agt}, \left(\katzscorei\right)_{i \in \agt}\rangle$ among the set of agents $\agt$, with the \emph{strategy} of agent
$i \in \agt$ being its allocation $\boldwi \in \constraintseti$ and its utility function being its Katz centrality $\katzscorei$. We refer to $\Mc{G}$ as the \emph{network formation game}. The set of \emph{Nash equilibria} of this game $\mathcal{G}$ is
\begin{align}
  \notag \neset & := \{ \boldwstar \in \constraintset \,\,|\,\, \forall i \in \agt, \\
                &\quad \katzscorei(\boldwistar,\boldwnotistar) \geq \katzscorei(\boldwi,\boldwnotistar), \forall \boldwi \in \constraintseti \}\,. \label{eq:neset}
\end{align}
Thus, $\boldwstar \in \neset$ if and only if $\boldwistar \in \Mc{BR}_{i}( \boldwnotistar )$ for all $i \in \agt$, where
$\displaystyle \bestrespseti := \argmax_{\boldwi \in \constraintseti} \katzscorei(\boldwi,\boldwnoti)$ is the set of best responses of agent $i$ to $\boldwnoti$. For a $\boldwstar \in \neset$, we refer to the graph
$\grph(\boldwstar)$, or $\negrph$ for short when there is no confusion, as a Nash equilibrium network.

\subsubsection*{Network formation process}

In this paper, we also study the process of the network formation itself. In particular, we consider \emph{sequential} Best Response Dynamics (BRD) as the network formation process, which proceeds as follows. The process starts with an initial allocation profile $\boldw(0) \in \constraintset$. At each time step $k \in \nat$, an agent $i_k \in \agt$ is selected arbitrarily (randomly or otherwise) and the agent $i_k$ updates its allocation $\boldwik(k)$ by playing a best response to the allocation of the other agents $\boldwnotik(k-1)$, i.e.,
\begin{equation}
	\label{eq:BRD}
	\boldwik(k) \in \argmax_{\boldx \in \constraintsetik}\: \katzscoreik(\boldx,\boldwnotik(k-1)).
\end{equation}

We now briefly outline the main objectives of this work.

\subsubsection*{Objectives}
For the proposed network formation game, our goal is to understand the relationship between Katz centralities and resources within a Nash equilibrium network. Furthermore, we seek to characterize the set of all Nash equilibrium networks and analyze the convergence of BRD to this set. Finally, we aim to identify the structural properties of these equilibrium networks—specifically their sparsity and hierarchy—for various underlying network topologies.

\section{Analysis of the Game and Sequential BRD}

In this section, we analyze the network formation game $\ntwgame$ in detail. In particular, we investigate the existence and structural properties of Nash equilibria in the setting of
resource-constrained network formation. We also analyze the convergence of BRD. We begin our analysis by establishing that, for any agent $i \in \agt$, a best response to any allocation
$\boldwnoti$ of the other agents always exists. This ensures that the BRD in~\eqref{eq:BRD} is well-posed.

We first provide an alternative representation of the Katz centralities~\eqref{eq:katz_score}, which will be useful in the subsequent analysis. Recall that the Katz centrality~\eqref{eq:katz_score} of any agent $i \in \agt$ in a graph $\grph$ induced by an allocation profile $\boldw = (\boldwi,\boldwnoti) \in \constraintset$ is given by the sum of all weighted directed walks of all possible lengths emanating from $i$. Thus, Assumption~\ref{stand_asmp:budget_and_discount_factor} allows us to rewrite $\katzscorej(\boldw)$, for any $j \in \agt$ and for some $i \in \agt \setminus \{j\}$, as follows
\begin{align}
	\label{eq:alternate_formulation_of_katz_score}
	\notag \katzscorej(\boldw) &= \inftysumm \mwalksumjtonoti(\boldw) + \inftysumm \mwalksumjtoi(\boldw)\left[1 +  \katzscorei(\boldw)\right], \\
	&=: \pjiboldwnoti + \qjiboldwnoti[1+\katzscorei(\boldw)], 
\end{align} 
where $\mwalksumjtonoti(\boldw)$ is the sum of all weighted directed walks of length $m$ starting from node $j$ that do not reach node $i$, and $\mwalksumjtoi(\boldw)$ is the sum of all
weighted directed walks of length $m$ starting from node $j$ that reach node $i$ exactly once and terminate at $i$. Further, we define $\forall i \in \agt$ and $\forall j \in \agt \setminus \{ i \}$
\begin{align*}
  &\pjiboldwnoti := \inftysumm \mwalksumjtonoti(\boldw), \ \ \qjiboldwnoti := \inftysumm \mwalksumjtoi(\boldw)
  \\
  &d_{ji}( \boldwnoti ) := \pjiboldwnoti + \qjiboldwnoti + 1,
\end{align*}
and
\begin{equation*}
  q_{ii}( \boldwnoti ) := 1, \ \ d_{ii}( \boldwnoti ) := 1, \quad \forall i \in \agt.
\end{equation*}
Note that, since $\mwalksumjtonoti(\boldw)$ only includes walks that do not pass through $i$ and $\mwalksumjtoi(\boldw)$ only
includes walks that reach $i$ exactly once and terminate at $i$, $\pjiboldwnoti$, $\qjiboldwnoti$ and $d_{ji}( \boldwnoti )$ are independent of $\boldwi$ and depend only on $\boldwnoti$. Then, for any $\boldw=(\boldwi,\boldwnoti) \in \constraintset$, from~\eqref{eq:interdependence_betwn_katz_scores},~\eqref{eq:alternate_formulation_of_katz_score}, we have
\begin{equation}\label{eq:katz_score_fractional_linear}
	\katzscorei(\boldwi, \boldwnoti)
	= \frac{\displaystyle \sum_{j \in \Ni} d_{ji}( \boldwnoti ) \wij }{\displaystyle 1 - \sum_{j \in \Ni} \qjiboldwnoti \wij}.
\end{equation}

The following lemma gives a basic fact that we reuse later.
\begin{lemma}
  \label{lem:positiv-denom}
  Let $\boldw \in \constraintset$. Then,
  \begin{equation*}
    1 - \sum_{{j \in \Ni}} \qjiboldwnoti \wij > 0, \quad \forall i \in \agt.
  \end{equation*}
\end{lemma}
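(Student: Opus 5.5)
The idea is to read $\sum_{j \in \Ni} \qjiboldwnoti \wij$ probabilistically, or rather combinatorially. It is the total weight of all walks that start at $i$ and first return to $i$ exactly once, terminating there: the first-return walks, or closed walks at $i$ that do not revisit $i$ in between. Write $r_i(\boldw) := \sum_{j\in\Ni} \qjiboldwnoti \wij$. The first edge $i\mapsto j$ contributes $\wij$. If $j=i$ (a self-loop), that edge alone is a first-return walk, which is consistent with $q_{ii}=1$. If $j\neq i$, the continuation is a walk from $j$ reaching $i$ exactly once and ending there, with total weight $\qjiboldwnoti$. So $r_i(\boldw)$ equals the sum, over all lengths $\ell \ge 1$, of the weights of first-return walks of length $\ell$ at $i$.

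The key step is the standard renewal decomposition of closed walks. Every closed walk at $i$ of positive length splits uniquely into a concatenation of $t\ge1$ first-return walks. Hence
\begin{equation*}
\sum_{k=1}^{\infty} [\adjmat^k(\boldw)]_{ii} = \sum_{t=1}^{\infty} r_i(\boldw)^t .
\end{equation*}
All terms are nonnegative because $\boldw \in \constraintset \subset \real^{n^2}_{\ge 0}$, so the rearrangement is justified. Next I would check that the left-hand side is finite. Each row sum of $\adjmat(\boldw)$ is at most $B_i<1$ by \ref{stand_asmp:budget_and_discount_factor}, so $\rho(\adjmat)\le \|\adjmat\|_\infty \le \max_i B_i <1$. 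Then $\sum_k \adjmat^k$ converges entrywise, and in particular $[\adjmat^k]_{ii}$ is summable. A geometric series with nonnegative ratio $r_i$ has a finite sum only if $r_i<1$. Therefore $1 - r_i(\boldw) > 0$, which is the claim.

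The main obstacle is making the walk bookkeeping rigorous. Specifically, one must confirm that $\qjiboldwnoti$, defined as $\inftysumm \mwalksumjtoi(\boldw)$, is finite. This holds because it is dominated by $\sum_m [\adjmat^m]_{ji}$. One must also confirm that the decomposition is a bijection between closed walks and sequences of first-return walks, which follows by cutting at each visit to $i$. Handling the self-loop case $j=i$ with $q_{ii}=1$ should also be stated explicitly. If the combinatorial route proves awkward, a fallback is available. Since $\katzscorei$ is finite, \eqref{eq:katz_score_fractional_linear} can be rearranged as $\katzscorei(1-r_i)=\sum_j d_{ji}\wij$. When $\boldwi\neq\boldzero$, the right side is at least $\sum_j \wij >0$ because $d_{ji}\ge1$, and $\katzscorei\ge 0$ is finite, so $1-r_i>0$. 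When $\boldwi=\boldzero$, we have $r_i=0$ and the claim is trivial. I expect to present this fallback as the main argument, since it uses \eqref{eq:katz_score_fractional_linear} directly.
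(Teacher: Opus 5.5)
Both of your arguments are correct. The fallback you plan to present is essentially the paper's proof. The paper likewise dismisses $\boldwi=\boldzero$, and for $\boldwi\neq\boldzero$ it combines $\katzscorei(\boldw)>0$ with $d_{ji}(\boldwnoti)\geq 1$ in \eqref{eq:katz_score_fractional_linear} to force the denominator to be positive. Your version is marginally tighter, since $\katzscorei(1-r_i)>0$ together with $\katzscorei\geq 0$ gives $\katzscorei>0$ and $1-r_i>0$ at once.

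In either version, derive the product identity $\katzscorei(\boldw)\,(1-r_i(\boldw))=\sum_{j\in\Ni} d_{ji}(\boldwnoti)\wij$ directly. Substitute \eqref{eq:alternate_formulation_of_katz_score} into \eqref{eq:interdependence_betwn_katz_scores} for $j\neq i$, and use $q_{ii}=d_{ii}=1$ for the self-loop term. Do not obtain it by clearing the denominator of \eqref{eq:katz_score_fractional_linear}. That quotient already presupposes the denominator is nonzero, which is part of what the lemma establishes.

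Your renewal-decomposition argument is a genuinely different route. It bypasses the centrality identities entirely and reads $r_i$ as the total weight of first-return walks at $i$. It then gets $r_i<1$ from $\sum_{k\geq 1}[\adjmat^k]_{ii}=\sum_{t\geq 1} r_i^t<\infty$, which needs only $\rho(\adjmat)\leq\max_i B_i<1$.

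That route buys two things. It is free of any circularity, and it identifies the denominator structurally as $1-r_i=1/[(\identityn-\adjmat)^{-1}]_{ii}$. Its cost is the walk bookkeeping you correctly flagged: finiteness of $\qjiboldwnoti$ and bijectivity of the cut-at-each-visit decomposition. The paper's route, by contrast, takes two lines given the identities it has already established.
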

\begin{proof}
  Consider any $\boldw \in \constraintset$ and any $i \in \agt$. If $\boldwi = \boldzero$ then the claim holds trivially. If $\boldwi \neq \boldzero$ then from~\eqref{eq:katz_score},
  $C_i(\boldwi, \boldwnoti) > 0, \forall \boldwnoti \in \constraintsetnoti$. Since $\pjiboldwnoti \geq 0$ and $\qjiboldwnoti \geq 0$,
  $\pjiboldwnoti + \qjiboldwnoti + 1 > 0, \forall j \in \Ninoti$. The claim now follows from~\eqref{eq:katz_score_fractional_linear}.
\end{proof}

We are now ready to state the next result, which establishes (among other things) that a best response allocation $\boldwi \in \constraintseti$ to $\boldwnoti \in \constraintsetnoti$ always exists for any agent $i \in \agt$.

\begin{lemma}
  \thmtitle{On best response set}
  \label{lem:existence_of_best_response}
  Let $\grphunweighted$ be a given underlying topology.
  Consider the network formation game $\Mc{G}$ and any $\boldw \in \constraintset$.
  For any agent
  $i \in \agt$, $\bestrespseti$ is non-empty, convex and
  \begin{equation*}
    B_i\boldej \in \bestrespseti, \ \forall j \in \argmax_{k \in \Ni(\grphunweighted)} \left\{ \frac{d_{ki}( \boldwnoti )}{1 - q_{ki}( \boldwnoti ) B_{i}} \right\}.
  \end{equation*}
\end{lemma}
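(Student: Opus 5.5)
Throughout, I fix $\boldwnoti$ and write $d_j := d_{ji}(\boldwnoti)$ and $q_j := q_{ji}(\boldwnoti)$ for $j \in \Ni$. By \eqref{eq:katz_score_fractional_linear}, the objective is the linear-fractional function
\[
f(\boldx) = \frac{\sum_{j \in \Ni} d_j x_j}{1 - \sum_{j\in\Ni} q_j x_j},
\]
maximized over the polytope $\constraintseti$. By Lemma~\ref{lem:positiv-denom} the denominator is positive on all of $\constraintseti$. (That lemma is stated for feasible profiles, and any $\boldx \in \constraintseti$ combined with $\boldwnoti$ is feasible.)

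\textbf{Non-emptiness and convexity.} $f$ is continuous on the compact set $\constraintseti$, so a maximizer exists. Let $c^*$ be the maximum value. Since the denominator is positive, $f(\boldx) \ge c^*$ is equivalent to
\[
\sum_j (d_j + c^* q_j) x_j \ge c^*.
\]
Hence $\bestrespseti = \{\boldx \in \constraintseti : \sum_j (d_j + c^* q_j)x_j \ge c^*\}$, the intersection of a convex polytope with a half-space, which is convex.

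\textbf{Identifying a vertex maximizer.} The same linearization shows that $\boldx$ is optimal if and only if it maximizes the linear function $g(\boldx) = \sum_j (d_j + c^* q_j) x_j$ over $\constraintseti$ and attains value $\ge c^*$.
\begin{itemize}
\item $d_j \ge 1 > 0$ and $c^* \ge 0$, so every coefficient of $g$ is positive.
\item The maximum of $g$ over $\{\boldx \ge 0,\ \sum_j x_j \le B_i,\ \supp \subset \Ni\}$ is therefore $B_i \max_k (d_k + c^* q_k)$, attained at $B_i \mathbf{e}_k$ for any maximizing $k$.
\end{itemize}
So at least one vertex $B_i\mathbf{e}_k$ is a best response.

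\textbf{Main step: matching the stated ratio.} I need to show that maximizing $d_k + c^* q_k$ is the same as maximizing $r_k := d_k/(1 - q_k B_i)$, which equals $f(B_i \mathbf{e}_k)/B_i$. Here $1 - q_k B_i > 0$ by Lemma~\ref{lem:positiv-denom} applied to the vertex. I would argue directly, using two facts: $f$ is quasi-linear, and every nonzero feasible point is a scaling of a convex combination of vertices.

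\begin{enumerate}
\item Since $f(\boldzero) = 0$ and $f$ increases along rays (numerator increases, denominator decreases), a maximizer can be taken with $\sum_j x_j = B_i$, i.e., on the simplex face.
\item On that face, $\boldx = \sum_k \lambda_k B_i \mathbf{e}_k$ with $\lambda_k \ge 0$ and $\sum_k \lambda_k = 1$.
\item Then
\[
f(\boldx) = \frac{\sum_k \lambda_k B_i d_k}{\sum_k \lambda_k (1 - q_k B_i)},
\]
a weighted mediant of the ratios $B_i d_k / (1 - q_k B_i)$ with positive weights $\lambda_k(1 - q_k B_i)$.
\item A mediant is at most the largest ratio, so $f(\boldx) \le B_i \max_k r_k = \max_k f(B_i \mathbf{e}_k)$.
\end{enumerate}
Therefore $B_i \mathbf{e}_j \in \bestrespseti$ for every $j \in \argmax_k r_k$, which is the claim. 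This mediant inequality is the key step; the only care needed is checking that each denominator $1 - q_k B_i$ is positive, which Lemma~\ref{lem:positiv-denom} supplies.
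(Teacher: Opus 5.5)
Your proof is correct, but it reaches the vertex claim by a different route from the paper. The paper changes variables to $z_{ij} = (1-q_{ji}B_i)w_{ij}/(1-\sum_{k\in\Ni} q_{ki}w_{ik})$. Using an explicit inverse, it checks that this map is a support-preserving bijection of $\constraintseti$ onto itself that fixes each $B_i\mathbf{e}_j$. In the new coordinates $c_i$ becomes the linear function $\sum_j r_j z_{ij}$ with $r_j = d_{ji}/(1-q_{ji}B_i)$, which are exactly your ratios. Non-emptiness and the claim about $B_i\mathbf{e}_j$ then follow from a linear program. Convexity is proved separately, by computing that $c_i$ is constant along the segment joining two best responses. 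You instead get existence from compactness, and convexity from the half-space description of the superlevel set $\{f \ge c^*\}$, which is cleaner than the paper's computation. You get the vertex claim from ray monotonicity plus a weighted-mediant bound on the face $\sum_j x_j = B_i$. The two arguments are closely linked: on that face, your normalized mediant weights $\lambda_k(1-q_kB_i)/\sum_l \lambda_l(1-q_lB_i)$ are precisely $z_{ik}/B_i$. So your argument is the paper's transformation restricted to the top face, with ray monotonicity doing the work the bijection does on the rest of $\constraintseti$. Your version is more elementary and avoids verifying the inverse map and feasibility preservation. The paper's exact LP equivalence, though not exploited there, identifies the whole best-response set as the allocations that exhaust $B_i$ and are supported on $\argmax_k r_k$; you could recover the same description from the equality cases of strict ray monotonicity and the mediant inequality. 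One small write-up point: the linearization with $g$, and the announced goal of matching $\argmax_k (d_k + c^* q_k)$ with $\argmax_k r_k$, are never used. The mediant step alone shows $f \le \max_k f(B_i\mathbf{e}_k)$ on $\constraintseti$, which is all the statement needs.
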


\begin{proof}
  Since $\boldwnoti \in \constraintsetnoti$ is a fixed parameter as far as $\bestrespseti$ is concerned, we will drop $\boldwnoti$ from most of the notation in this proof.

\emph{Non-emptiness and single edge allocations in $\bestrespseti$:}  
 We first introduce a change of variables and rewrite the expression for $\katzscorei(\boldwi ,\boldwnoti)$ in~\eqref{eq:katz_score_fractional_linear} more concisely. Let
  \begin{align*}
    &f_{ji} := \frac{d_{ji}}{1 - q_{ji} B_{i}}, \ z_{ij} := \frac{(1 - q_{ji} B_i) w_{ij}}{1 - \sum_{k \in \Ni} q_{ki} w_{ik}},  \quad \forall j \in \agt.
  \end{align*}
  Now, notice from~\eqref{eq:katz_score_fractional_linear} that $\katzscorei(\boldwi, \boldwnoti) = \sum_{j \in \Ni} f_{ji} z_{ij}$. Moreover, Lemma~\ref{lem:positiv-denom} implies that $( 1 - q_{ji} B_{i} ) > 0$ for all $j \in \Ni$. Further,
  \begin{equation*}
    \sum_{j \in \Ni} z_{ij} = \frac{\sum_{j \in \Ni} w_{ij} - B_{i} \sum_{j \in \Ni} q_{ji} w_{ij}}{1 - \sum_{j \in \Ni} q_{ji} w_{ij}} ,
  \end{equation*}
  from which we can reason that $\sum_{j \in \Ni} z_{ij} \leq B_i$ iff $\sum_{j \in \Ni} w_{ij} \leq B_i$.  From these observations, we can say that
  \begin{enumerate}[label=\textbf{(S\arabic*)}] 
  \item If $\boldwi \in \constraintseti$ then $\boldzi \in \constraintseti$ and $\supp( \boldwi ) = \supp( \boldzi )$.
  \end{enumerate}
  Next, observe that 
  \begin{align*}
    & \frac{1}{1 - \sum_{j \in \Ni} q_{ji} w_{ij}} - \sum_{j \in \Ni} \frac{q_{ji} z_{ij}}{1 - q_{ji} B_i}
    \\
    &= \frac{1}{1 - \sum_{j \in \Ni} q_{ji} w_{ij}} \left[ 1 - \sum_{j \in \Ni} q_{ji} w_{ij} \right] = 1 .
  \end{align*}
  We thus have the inverse map from $\boldzi$ to $\boldwi$ as
  \begin{equation*}
    w_{ij} = \frac{z_{ij}}{1 - q_{ji}B_i + \sum_{k \in \Ni} q_{ki} z_{ik}}.
  \end{equation*}
  By similar arguments as above we can say that
  \begin{enumerate}[resume,label=\textbf{(S\arabic*)}]
  \item If $\boldzi \in \constraintseti$ then $\boldwi \in \constraintseti$  and $\supp( \boldwi ) = \supp( \boldzi )$.
  \end{enumerate}
  Thus, the task of finding the best response set $\bestrespseti$ is equivalent to solving and transforming the set
  \begin{equation}
    \Mc{S} := \argmax_{\boldzi \in \constraintseti} \sum_{j \in \Ni} f_{ji} z_{ij}
    \label{eq:LP-after-transform}
  \end{equation}
  back to the $\boldw$ space. The optimization problem in~\eqref{eq:LP-after-transform} is a linear program and given $\constraintseti$, we can say that $B_i \boldej \in \Mc{S}$ for all
  $j \in \argmax_{j \in \Ni} f_{ji}$. Notice from the inverse map that if $\boldzi = B_i \boldej$ then $\boldwi = \boldzi = B_i \boldej$. Hence, $B_i \boldej \in \bestrespseti$ for all
  $j \in \argmax_{j \in \Ni} f_{ji}$. This proves the non-emptiness of and the existence of single edge allocations in $\bestrespseti$.

  \emph{Convexity of $\bestrespseti$:} From~\eqref{eq:katz_score_fractional_linear}, we can write $\katzscorei(\boldwi, \boldwnoti)
  = \displaystyle \frac{ \bolddi^\top \boldwi }{1 - \boldqi^{\top} \boldwi }$. Let $\hat{\boldw}_i, \bar{\boldw}_i \in \bestrespseti$ be two best responses by agent $i$ to $\boldwnoti$. We thus have
  \begin{equation*}
    \bolddi^\top \bar{\boldw}_i = \bolddi^\top \hat{\boldw}_i \left( \frac{ 1 - \boldqi^{\top} \bar{\boldw}_i }{1 - \boldqi^{\top} \hat{\boldw}_i } \right).
  \end{equation*}
  Now consider $\boldwi := \lambda \hat{ \boldw }_i + (1 - \lambda) \bar{  \boldw }_i$, with $\lambda \in [0, 1]$. So,
  \begin{align*}
    \katzscorei(\boldwi, \boldwnoti) %
    &= \frac{ \bolddi^\top [\lambda \hat{ \boldw }_i + (1 - \lambda) \bar{  \boldw }_i] }{1 - \boldqi^{\top} [\lambda \hat{ \boldw }_i + (1 - \lambda) \bar{  \boldw }_i] }
    \\
    &= \frac{\bolddi^\top \hat{\boldw}_i \left[ \lambda + (1 - \lambda)\left( \frac{ 1 - \boldqi^{\top} \bar{\boldw}_i }{1 - \boldqi^{\top} \hat{\boldw}_i } \right) \right]}{1 - \boldqi^{\top} [\lambda \hat{ \boldw }_i + (1 - \lambda) \bar{  \boldw }_i] }
    \\
    &= \katzscorei( \hat{\boldwi}, \boldwnoti).
  \end{align*}
  Thus, $\boldwi := \lambda \hat{ \boldw }_i + (1 - \lambda) \bar{  \boldw }_i \in \bestrespseti$, and hence $\bestrespseti$ is a convex set.
\end{proof}

Lemma~\ref{lem:existence_of_best_response} shows the non-emptiness and convexity of the best response set for any agent $i \in \agt$. Further there exist some best responses to $\boldwnoti$ wherein the entire resource budget is allocated to a single well-chosen agent $j \in \Ni$.

The next result establishes certain properties of \emph{better responses}. Recall that a better response of any agent $i \in \agt$ to the allocations $\boldwnoti \in \constraintsetnoti$ of the other agents with respect to $\boldwi \in \constraintseti$ is any allocation $\boldyi \in \constraintseti$ such that $\katzscorei(\boldyi,\boldwnoti) \geq \katzscorei(\boldwi,\boldwnoti)$. If the above inequality is strict, then $\boldyi$ is called a \emph{strict better response}. We will use $\betterrespseti$ and $\strictbetterrespseti$ to denote the sets of better responses and strict better responses with respect to $\boldwi \in \constraintseti$, respectively. Note that, in general, $\bestrespseti \subseteq \betterrespseti$ and $\strictbetterrespseti \subseteq \betterrespseti$. Moreover, whenever $\strictbetterrespseti \neq \varnothing$, it holds that $\bestrespseti \subseteq \strictbetterrespseti$.

\begin{lemma}
	\thmtitle{On the better and best responses}
	\label{lem:properties_of_better_resp}
	Let $\grphunweighted$ be a given underlying topology. Consider the network formation game $\Mc{G}$ with Katz centralities given in~\eqref{eq:katz_score}. Define the function \begin{equation}
		\label{eq:v_function}
		\Vi(\boldx):=B_i[1+\max_{j \in \Ni}x_j], \ \boldx \in \realn_{\geq 0}.
	\end{equation} Then for any agent $i \in \agt$ and $\boldw = (\boldwi,\boldwnoti)\in \constraintset$,
	\begin{enumerate}
		\item \label{lem:btr_resp_1} If $\boldyi \in \betterrespseti$ then $ \katzscorej(\boldyi,\boldwnoti) \geq \katzscorej(\boldwi,\boldwnoti),$ $\forall j \in \agt$. 
		
		\item \label{lem:btr_resp_2} $\boldyi \in \strictbetterrespseti$ if and only if $\agtsumjni (\yij - \wij)[1+\katzscorej(\boldw)]>0$.
		
		\item \label{lem:btr_resp_3} $\boldwi \in \bestrespseti$ if and only if $\strictbetterrespseti = \varnothing$ if and only if
                  $\katzscorei(\boldw)=\Vi(\katzscorestacked(\boldw))$.
                \end{enumerate}
Moreover,
for any $\boldwi \in \bestrespseti$, $\agtsumjni \wij=B_i$ and $\wij>0$ implies $j\in\argmax_{k\in\Ni} \katzscorek( \boldwi, \boldwnoti )$.
\end{lemma}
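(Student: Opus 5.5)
Throughout, the tools are the fractional-linear form \eqref{eq:katz_score_fractional_linear}, the identity \eqref{eq:alternate_formulation_of_katz_score}, and the fixed-point relation of Lemma~\ref{lem:interdependence_betwn_katz_scores}.

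\emph{Part~\ref{lem:btr_resp_2}.} I start here because the other parts rest on it. Write $\katzscorei(\boldwi,\boldwnoti)=\bolddi^\top\boldwi/(1-\boldqi^\top\boldwi)$ with positive denominators (Lemma~\ref{lem:positiv-denom}). For $j\neq i$, \eqref{eq:alternate_formulation_of_katz_score} gives $1+\katzscorej(\boldw)=d_{ji}+q_{ji}\katzscorei(\boldw)$, and this also holds for $j=i$ with $d_{ii}=q_{ii}=1$. Hence
\begin{equation*}
\agtsumjni(\yij-\wij)[1+\katzscorej(\boldw)]=\bolddi^\top(\boldyi-\boldwi)+\katzscorei(\boldw)\,\boldqi^\top(\boldyi-\boldwi).
\end{equation*}
Next, substitute $\bolddi^\top\boldwi=\katzscorei(\boldw)(1-\boldqi^\top\boldwi)$. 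The right-hand side then equals $\bolddi^\top\boldyi-\katzscorei(\boldw)(1-\boldqi^\top\boldyi)$, which is $(1-\boldqi^\top\boldyi)\,[\katzscorei(\boldyi,\boldwnoti)-\katzscorei(\boldw)]$. The sign equivalence follows, and the same identity shows that the non-strict version of part~\ref{lem:btr_resp_2} characterizes $\betterrespseti$.

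\emph{Part~\ref{lem:btr_resp_1}.} For $j\neq i$, \eqref{eq:alternate_formulation_of_katz_score} writes $\katzscorej=p_{ji}+q_{ji}[1+\katzscorei]$, where $p_{ji}$ and $q_{ji}$ do not depend on agent $i$'s allocation and $q_{ji}\ge0$. So $\katzscorej$ is nondecreasing in $\katzscorei$, and a better response, which does not decrease $\katzscorei$, does not decrease any $\katzscorej$. The case $j=i$ is the definition of $\betterrespseti$.

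\emph{Part~\ref{lem:btr_resp_3} and the final claim.} The first equivalence, $\boldwi\in\bestrespseti$ iff $\strictbetterrespseti=\varnothing$, is immediate from the definitions. For the second, let $M=\max_{k\in\Ni}\katzscorek(\boldw)$. By Lemma~\ref{lem:interdependence_betwn_katz_scores}, $\katzscorei(\boldw)=\agtsumjni\wij(1+\katzscorej)\le B_i(1+M)=\Vi(\katzscorestacked(\boldw))$. Equality holds iff $\sum\wij=B_i$ and $\wij>0$ only for maximizers $j$; here $1+\katzscorej\ge 1>0$ is used, so slack in the budget strictly loses.

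To prove ``no strict better response $\Rightarrow$ equality'', I use part~\ref{lem:btr_resp_2} with $\boldyi=B_i\mathbf{e}_{j^*}$, where $j^*$ attains $M$. Then $\agtsumjni(\yij-\wij)[1+\katzscorej]=\Vi-\katzscorei>0$ whenever the inequality is strict, so some strict better response would exist.

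For the converse, assume $\katzscorei=\Vi$ but some $\boldyi\in\strictbetterrespseti$ exists. Part~\ref{lem:btr_resp_2} would require $\agtsumjni\yij(1+\katzscorej(\boldw))>\katzscorei(\boldw)=B_i(1+M)$. However, the left side is at most $B_i(1+M)$, a contradiction.

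The closing claim (budget exhausted, support on maximizers) is exactly the equality condition above, now evaluated at $\boldw=(\boldwi,\boldwnoti)$ for a best response $\boldwi$.

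The main obstacle is the algebraic identity in part~\ref{lem:btr_resp_2}: correctly relating $1+\katzscorej$ to $d_{ji},q_{ji}$, including the self-loop case $j=i$. Everything else follows from it together with the fixed-point relation.
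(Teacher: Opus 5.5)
Your proposal is correct and follows essentially the same route as the paper. Part~1 uses $c_j=p_{ji}+q_{ji}[1+c_i]$ with $p_{ji},q_{ji}\ge 0$ independent of $\mathbf{w}_i$. Part~2 uses the identity $c_i(\mathbf{y}_i,\mathbf{w}_{-i})-c_i(\mathbf{w})=\sum_{j}(y_{ij}-w_{ij})[1+c_j(\mathbf{w})]/(1-\sum_j q_{ji}y_{ij})$, which you derive explicitly where the paper only asserts it. Part~3 and the final claim use the linear-program bound $\sum_j y_{ij}[1+c_j(\mathbf{w})]\le B_i[1+\max_{k\in\mathcal{N}_i}c_k(\mathbf{w})]$ together with its equality conditions; your test allocation $B_i\mathbf{e}_{j^*}$ is just a concrete form of the paper's remark that this linear program attains the value $v_i(\mathbf{c}(\mathbf{w}))$.
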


\begin{proof}
	Consider an agent $i \in \agt$ and an allocation profile $\boldw \in \constraintset$. Recall the form of $\katzscorej(\boldwi,\boldwnoti)$ from~\eqref{eq:alternate_formulation_of_katz_score}. Suppose $\boldyi \in \betterrespseti$. Claim~\ref{lem:btr_resp_1} now follows from the fact that $\qjiboldwnoti \geq 0$ and $\pjiboldwnoti \geq 0$, and that both are independent of $\boldwi$. 
	
	Next, for any $\boldyi \in \constraintseti$, observe from~\eqref{eq:interdependence_betwn_katz_scores} and~\eqref{eq:alternate_formulation_of_katz_score} that
\begin{equation*}
  \katzscorei(\boldyi,\boldwnoti) - \katzscorei(\boldwi,\boldwnoti) = \frac{\agtsumjni (\yij - \wij)[1+\katzscorej(\boldw)]}{\displaystyle 1 -  \sum_{j \in \Ni} \yij  \qjiboldwnoti}.
\end{equation*}
Claim~\ref{lem:btr_resp_2} now follows from Lemma~\ref{lem:positiv-denom}.
	
	Next, notice by definition, $\boldwi \in \bestrespseti$ if and only if $\strictbetterrespseti = \varnothing$. From Claim~\ref{lem:btr_resp_2}, observe that for any feasible $\boldyi \in \constraintseti$, $\boldyi \notin \strictbetterrespseti$ if and only if $\agtsumjni (\yij - \wij)[1+\katzscorej(\boldw)] \leq 0$. Equivalently, $\strictbetterrespseti = \varnothing$ if and only if $\max_{\boldyi \in \constraintseti} \agtsumjni \yij[1+\katzscorej(\boldw)] = \agtsumjni \wij[1+\katzscorej(\boldw)] = \katzscorei(\boldw)$, where the last equality follows from~\eqref{eq:interdependence_betwn_katz_scores}. Using the definition of $\constraintseti$ given in~\eqref{eq:agent_constraint_set}, and since $\katzscorei(\boldw)\geq 0$ for all $\boldw \in \constraintset$, it follows that the above linear program attains the optimal value $\max_{\boldyi \in \constraintseti} \agtsumjni \yij[1+\katzscorej(\boldw)] = B_i[1+\max_{j \in \Ni}\katzscorej(\boldw)]$. Claim~\ref{lem:btr_resp_3} now follows from~\eqref{eq:v_function}. 
	
	Finally, from ~\eqref{eq:v_function} and ~\eqref{eq:interdependence_betwn_katz_scores}, it can be easily seen that 
			$B_i(1+\max_{j \in \Ni}\katzscorej(\boldwi,\boldwnoti)) = \agtsumjni \wij[1+\katzscorej(\boldwi,\boldwnoti)]$.
	Since $\boldwi \in \constraintseti$ and 
	RHS $\leq$ LHS trivially, the above equality holds if and only if $\agtsumjni \wij = B_i$ and $\wij > 0$ implies $j \in \argmax_{k \in \Ni}\katzscorek(\boldwi,\boldwnoti)$.
	The proof is now complete.
\end{proof}

 Lemma~\ref{lem:properties_of_better_resp} provides the following insights. At a given allocation profile $\boldw \in \constraintset$, if any individual agent chooses an allocation $\boldyi \in \betterrespseti$, then it does not decrease the Katz centralities of the other agents. Thus, agent $i$ choosing a better response does not penalize the other agents. The same holds for best response allocations $\boldyi \in \bestrespseti$, since a best response is also a better response. The above result also helps us provide a necessary and sufficient condition for a network $\grph(\boldw)$ induced by an allocation profile $\boldw \in \constraintset$ to be a Nash equilibrium network. 

\begin{theorem}
	\thmtitle{Characterization of Nash equilibria}
	\label{thm:nash_set_characterization}
	Let $\grphunweighted$ be a given underlying topology. Consider the network formation game $\Mc{G}$ with Katz centralities given in~\eqref{eq:katz_score}. Let $\Vi(\cdot)$ be as defined
	in~\eqref{eq:v_function}. Then,
	\begin{enumerate}
		\item \label{thm:nash_eqm_set_1}
		$\boldw^* \in \neset$ if and only if $\Vi(\katzscorestacked(\boldw^*)) = \katzscorei(\boldw^*), \forall i \in \agt$.
		
		\item \label{thm:reln_betwn_nash_eqm_centralities}
		For any agent $i\in \agt$, if $\wij^* > 0$ for some $j \in \Ni(\grphunweighted)$ then $\katzscorei(\boldw^*)=B_i[1+\katzscorej(\boldw^*)]$.
		
		\item \label{thm:nash_eqm_set_2}
		$\exists$ an unique $\katzscorestacked^* \in \realn_{\geq 0}$  such that $\neset = \{\boldw \in \constraintset \mid \katzscorestacked(\boldw) = \katzscorestacked^*\}$.
	\end{enumerate}
\end{theorem}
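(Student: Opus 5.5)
Claims \ref{thm:nash_eqm_set_1} and \ref{thm:reln_betwn_nash_eqm_centralities} follow almost directly from Lemma~\ref{lem:properties_of_better_resp}; Claim~\ref{thm:nash_eqm_set_2} needs an additional argument. For Claim~\ref{thm:nash_eqm_set_1}, $\boldwstar \in \neset$ iff $\boldwistar \in \mathcal{BR}_i(\boldwnotistar)$ for every $i$. By Claim~\ref{lem:btr_resp_3} of Lemma~\ref{lem:properties_of_better_resp}, this holds iff $\katzscorei(\boldwstar) = \Vi(\katzscorestacked(\boldwstar))$ for every $i$.

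For Claim~\ref{thm:reln_betwn_nash_eqm_centralities}, I would use the last statement of Lemma~\ref{lem:properties_of_better_resp}. At a best response, $\sum_{j\in\Ni} w^*_{ij} = B_i$, and $w^*_{ij}>0$ forces $j \in \argmax_{k\in\Ni} \katzscorek(\boldwstar)$. Hence $\katzscorej(\boldwstar) = \max_{k\in\Ni}\katzscorek(\boldwstar)$, and Claim~\ref{thm:nash_eqm_set_1} gives $\katzscorei(\boldwstar) = \Vi(\katzscorestacked(\boldwstar)) = B_i[1+\katzscorej(\boldwstar)]$.

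For Claim~\ref{thm:nash_eqm_set_2}, the plan is to show that the map $T:\realn_{\ge 0}\to\realn_{\ge0}$, $T(\boldx)_i := \Vi(\boldx) = B_i[1+\max_{j\in\Ni}x_j]$, has a unique fixed point $\katzscorestacked^*$. Since $B_i<1$ by \ref{stand_asmp:budget_and_discount_factor}, $T$ is a contraction in the $\infty$-norm:
\[
|T(\boldx)_i - T(\boldy)_i| \le B_i \max_{j\in\Ni}|x_j-y_j| \le (\max_k B_k)\,\|\boldx-\boldy\|_\infty .
\]
Banach's theorem on the closed set $\realn_{\ge0}$ then gives existence and uniqueness of the fixed point. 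By Claim~\ref{thm:nash_eqm_set_1}, every $\boldwstar\in\neset$ satisfies $\katzscorestacked(\boldwstar) = T(\katzscorestacked(\boldwstar))$, so $\katzscorestacked(\boldwstar)=\katzscorestacked^*$. Conversely, any $\boldw\in\constraintset$ with $\katzscorestacked(\boldw) = \katzscorestacked^*$ satisfies $\katzscorei(\boldw)=\Vi(\katzscorestacked(\boldw))$ for all $i$, and so $\boldw\in\neset$ by Claim~\ref{thm:nash_eqm_set_1}. This gives $\neset = \{\boldw\in\constraintset \mid \katzscorestacked(\boldw)=\katzscorestacked^*\}$.

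The main obstacle is that this set equality only means something if $\neset\neq\varnothing$; otherwise both sides could be empty and $\katzscorestacked^*$ would not be realised by any allocation. I would therefore construct an explicit realisation of $\katzscorestacked^*$. For each $i$, pick $j(i)\in\argmax_{k\in\Ni}c^*_k$ (possible by \ref{stand_asmp:weak_conn_of_underlying_graph}) and set $\boldw_i := B_i\mathbf{e}_{j(i)}$. Then each row of $\adjmat(\boldw)$ has a single entry $B_i<1$, so $\rho(\adjmat)<1$ and the centralities are the unique solution of $\katzscorestacked = \adjmat[\boldone+\katzscorestacked]$ from Lemma~\ref{lem:interdependence_betwn_katz_scores}. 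The vector $\katzscorestacked^*$ solves this system, since $c^*_i = B_i[1+c^*_{j(i)}]$. Uniqueness follows because $\identityn-\adjmat$ is invertible, and so $\katzscorestacked(\boldw)=\katzscorestacked^*$. This shows $\neset\neq\varnothing$ and, as a by-product, that a sparse equilibrium with one outgoing edge per agent exists. I would also check that the fixed point lies in $\realn_{\ge0}$, which holds automatically since $T$ maps $\realn_{\ge0}$ into itself.
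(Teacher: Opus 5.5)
Your proof is correct and follows the paper's: Claims 1 and 2 come straight from Lemma~\ref{lem:properties_of_better_resp}, and Claim 3 follows from the $\|\cdot\|_\infty$-contraction of $\boldv(\cdot)$ with modulus $\max_i B_i<1$, Banach's theorem, and Claim 1. Your extra single-edge profile $\boldw_i = B_i\mathbf{e}_{j(i)}$ realizing $\katzscorestacked^*$ is valid and fills a gap in the paper's proof, which only asserts $\neset\neq\varnothing$ in a later remark with the proof omitted; this step is what makes $\katzscorestacked^*$ unique in the set-equality sense of Claim 3, because an empty $\neset$ would equal $\{\boldw\in\constraintset \mid \katzscorestacked(\boldw)=\mathbf{x}\}$ for every $\mathbf{x}\in\realn_{\geq 0}$ outside the bounded set $\katzscorestacked(\constraintset)$.
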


\begin{proof}
	By definition $\boldw^*=(\boldwi^*,\boldwnoti^*) \in \neset$ if and only if $\boldwi^* \in \Mc{BR}_i(\boldwnoti^*), \forall i \in \agt$. Claim~\ref{thm:nash_eqm_set_1} now follows from Lemma~\ref{lem:properties_of_better_resp}. If $\wij^* >0, j \in \Ni(\grphunweighted)$, then from Lemma~\ref{lem:properties_of_better_resp}, $j \in \argmax_{k \in \Ni(\grphunweighted)} \katzscorek(\boldw^*)$. Claim~\ref{thm:reln_betwn_nash_eqm_centralities} follows from~\eqref{eq:v_function}.
	
	Finally, let $\boldv(\boldx):=[\V_1(\boldx), \ldots, \V_n(\boldx)]^\top$, where $\Vi(\boldx)$ is as defined in~\eqref{eq:v_function}.  For any
	$\boldw \in \constraintset$, we know from~\eqref{eq:katz_score} that $\katzscorestacked(\boldw)\geq \boldzero$. Thus, $\boldv(\cdot): \realn_{\geq 0} \to \realn_{\geq 0}$. Consider any
	$\boldx,\boldy \in \realn_{\geq 0}$. Then
	\begin{align*}
		\|\boldv(\boldx)-\boldv(\boldy)\|_{\infty} 
		&=   \max_{i \in \agt}B_i\left\{\:\left| \max_{j \in \Ni}x_j - \max_{j \in \Ni}y_j\right|\: \right\}, \\
		&\leq \max_{i \in \agt} B_i \|\boldx - \boldy\|_{\infty}.
	\end{align*}
	The above implies that $\boldv(\cdot)$ is a contraction on $\realn_{\geq 0}$, by virtue of~\ref{stand_asmp:budget_and_discount_factor}. Since ($\realn_{\geq 0}$, $\|\cdot\|_{\infty}$) is a complete metric space, by the Banach fixed point theorem, $\boldv(\cdot)$ has a unique fixed point $\katzscorestacked^* \in \realn_{\geq 0}$. From Claim~\ref{thm:nash_eqm_set_1}, we know that $\boldwi^* \in \neset$ if and only if $\boldv(\katzscorestacked(\boldw^*))=\katzscorestacked(\boldw^*)$. Claim~\ref{thm:nash_eqm_set_2} now follows.
\end{proof}

Claim~\ref{thm:nash_eqm_set_1} in Theorem~\ref{thm:nash_set_characterization} states that the network $\grph(\boldw^*)$ induced by $\boldw^* \in \constraintset$ is a Nash equilibrium network if and only if the resulting centrality vector $\katzscorestacked(\boldw^*)$ is a fixed point of the function $\boldv(\cdot)$. Claim~\ref{thm:nash_eqm_set_2} in Theorem~\ref{thm:nash_set_characterization} further states that any two distinct Nash equilibrium networks (if they exist) yield the same Katz centralities. However, Theorem~\ref{thm:nash_set_characterization} only guarantees the existence of $\katzscorestacked^*$. One can use the Banach fixed-point iteration to compute $\katzscorestacked^*$. 

The next result establishes a invariance property of $\neset$ under BRD for any general underlying topology $\grphunweighted$. Specifically, it shows that if an agent unilaterally switches to another best-response strategy, then the resulting network is still a Nash equilibrium network. Thus, no other agent has an incentive to deviate, as their centralities cannot be improved.
\begin{corollary}
	\thmtitle{Invariance of $\neset$ under unilateral best response deviations}
	Let $\grphunweighted$ be a given underlying topology. Consider the network formation game $\Mc{G}$ with Katz centralities given in~\eqref{eq:katz_score}. Let $\boldw^* =(\boldwi^*,\boldwnoti^*) \in \neset$. Suppose $\exists i \in \agt$ and $\boldx =(\boldxi,\boldw_{-i}) \in \constraintset$ such that $\katzscorei(\boldw^*) = \katzscorei(\boldx)$. Then, $\boldx \in \neset$. \remend
\end{corollary}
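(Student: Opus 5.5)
The plan is to reduce the claim to the characterization in Theorem~\ref{thm:nash_set_characterization}, Claim~\ref{thm:nash_eqm_set_2}. That claim says $\neset$ is exactly the set of feasible profiles whose centrality vector equals the unique $\katzscorestacked^*$. So it suffices to show $\katzscorestacked(\boldx) = \katzscorestacked(\boldw^*)$. Note that $\boldx = (\boldxi, \boldwnotistar)$, since the deviation is unilateral: the statement's $\boldw_{-i}$ is read as $\boldwnotistar$. Also $\boldx \in \constraintset$ by hypothesis.

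\textbf{Step 1: $\boldxi$ is a best response.} Since $\boldw^* \in \neset$, we have $\boldwistar \in \Mc{BR}_i(\boldwnotistar)$. Hence
\[
\katzscorei(\boldxi,\boldwnotistar) = \katzscorei(\boldwistar,\boldwnotistar) = \max_{\boldy_i \in \constraintseti} \katzscorei(\boldy_i,\boldwnotistar),
\]
so $\boldxi \in \Mc{BR}_i(\boldwnotistar)$. In particular, $\boldxi$ is a better response to $\boldwnotistar$ relative to $\boldwistar$, and conversely $\boldwistar$ is a better response relative to $\boldxi$.

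\textbf{Step 2: all other centralities are unchanged.} Apply Lemma~\ref{lem:properties_of_better_resp}, Claim~\ref{lem:btr_resp_1}, twice:
\begin{itemize}
\item at the profile $\boldw^*$ with the better response $\boldxi$, which gives $\katzscorej(\boldx) \geq \katzscorej(\boldw^*)$ for all $j$;
\item at the profile $\boldx$ with the better response $\boldwistar$, which gives the reverse inequality.
\end{itemize}
Hence $\katzscorestacked(\boldx) = \katzscorestacked(\boldw^*) = \katzscorestacked^*$.

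A more direct route to the same conclusion uses the representation~\eqref{eq:alternate_formulation_of_katz_score}: $\katzscorej = p_{ji}(\boldwnoti) + q_{ji}(\boldwnoti)[1+\katzscorei]$ for $j \neq i$. The quantities $p_{ji}$ and $q_{ji}$ depend only on $\boldwnotistar$, which is common to both profiles, and $\katzscorei$ is equal in both by hypothesis. So $\katzscorej$ coincides for every $j$. I would present this direct argument as the main proof, since it does not even need Step 1.

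\textbf{Step 3: conclude.} By Claim~\ref{thm:nash_eqm_set_2} of Theorem~\ref{thm:nash_set_characterization}, $\boldx \in \neset$. Alternatively, by Claim~\ref{thm:nash_eqm_set_1}, $\Vi(\katzscorestacked(\boldx)) = \Vi(\katzscorestacked^*) = \katzscorei(\boldx)$ for all $i$, which gives the same conclusion.

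There is no real obstacle. The only point needing care is the notational one: $\boldx$ must keep the other agents' allocations fixed at $\boldwnotistar$. That is what makes $p_{ji}$ and $q_{ji}$ identical across the two profiles.
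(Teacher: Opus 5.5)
Your proof is correct and follows the paper's argument. Like the paper, you use the decomposition \eqref{eq:alternate_formulation_of_katz_score}, in which $p_{ji}$ and $q_{ji}$ depend only on $\boldwnotistar$, to obtain $\katzscorestacked(\boldx) = \katzscorestacked(\boldw^*)$, and then conclude via Claim~\ref{thm:nash_eqm_set_2} of Theorem~\ref{thm:nash_set_characterization}; your remark that the paper's opening observation $\boldxi \in \Mc{BR}_i(\boldwnotistar)$ is not needed is accurate, since only $\katzscorei(\boldx) = \katzscorei(\boldw^*)$ and the shared $\boldwnotistar$ are used.
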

\begin{proof}
	Under the stated assumptions, $\boldxi \in \Mc{BR}_{i}(\boldwnoti^*)$. Also, $\boldwi^* \in \Mc{BR}_i(\boldwnoti^*)$. Thus, from~\eqref{eq:alternate_formulation_of_katz_score}, we have that $\katzscorej(\boldw^*) = \katzscorej(\boldx ) ,\forall j \in \agt$. Hence, $\katzscorestacked(\boldx ) = \katzscorestacked(\boldw^*)$. The result now follows from Claim~\ref{thm:nash_eqm_set_2} in Theorem~\ref{thm:nash_set_characterization}.
\end{proof}

Finally, we conclude this section with the main convergence result of BRD. 

\begin{theorem}
	\thmtitle{Convergence of BRD to $\neset$}
	Let $\grphunweighted$ be a given underlying topology. For a given initial network $\boldw(0) \in \constraintset$, consider the sequential BRD~\eqref{eq:BRD} with an agent update sequence $\{i_k\}_{k \in \nat}$ in which every agent in $\agt$ updates infinitely often (i.o). Let $\boldw(k)$ denote the network at time step $k$ generated by this BRD. Then, the sequence of networks $\{\boldw(k)\}_{k \in \nat}$ converges to $\neset$.
\end{theorem}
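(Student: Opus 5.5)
We plan to combine the monotonicity of centralities along BRD with the contraction property of $\boldv(\cdot)$ from Theorem~\ref{thm:nash_set_characterization}. Write $\katzscorestacked(k) := \katzscorestacked(\boldw(k))$.

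\emph{Step 1 (monotonicity).} Since a best response is in particular a better response, Claim~\ref{lem:btr_resp_1} of Lemma~\ref{lem:properties_of_better_resp} gives $\katzscorestacked(k) \geq \katzscorestacked(k-1)$ componentwise for every $k$.

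\emph{Step 2 (upper bound).} We show that $\katzscorestacked(k) \leq \katzscorestacked^*$ for all $k$. For any $\boldw \in \constraintset$, equation~\eqref{eq:interdependence_betwn_katz_scores} gives
\begin{equation*}
  \katzscorei(\boldw) \leq B_i\left[1+\max_{j\in\Ni}\katzscorej(\boldw)\right] = \Vi(\katzscorestacked(\boldw)),
\end{equation*}
that is, $\katzscorestacked(\boldw) \leq \boldv(\katzscorestacked(\boldw))$. The map $\boldv$ is monotone, so iterating yields $\katzscorestacked(\boldw) \leq \boldv^m(\katzscorestacked(\boldw))$ for every $m$. This right-hand side converges to $\katzscorestacked^*$ by the Banach argument in the proof of Theorem~\ref{thm:nash_set_characterization}. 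Hence every feasible profile satisfies $\katzscorestacked(\boldw) \leq \katzscorestacked^*$. Combined with Step 1, $\katzscorestacked(k)$ is nondecreasing and bounded, so it converges to some limit $\bar{\katzscorestacked} \leq \katzscorestacked^*$.

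\emph{Step 3 (the limit is the fixed point).} This is the main step. Fix an agent $i$ and consider an update time $k$ with $i_k = i$. By Claim~\ref{lem:btr_resp_3} of Lemma~\ref{lem:properties_of_better_resp}, applied after the update, $\katzscorei(k) = \Vi(\katzscorestacked(k))$. Because $i$ updates infinitely often, we can pass to the limit along these times. Continuity of $\Vi$ then gives $\bar c_i = \Vi(\bar{\katzscorestacked})$ for every $i$, so $\bar{\katzscorestacked} = \boldv(\bar{\katzscorestacked})$. Uniqueness of the fixed point gives $\bar{\katzscorestacked} = \katzscorestacked^*$.

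One subtlety must be checked here. Claim~\ref{lem:btr_resp_3} is stated for $\boldwi \in \bestrespseti$, and the centralities of the other agents are evaluated at the same profile. This is exactly the situation right after $i$ updates, since $\boldw_{i_k}(k)$ is a best response to $\boldw_{-i_k}(k) = \boldw_{-i_k}(k-1)$. So the identity holds at time $k$ without needing any uniform-in-time argument.

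\emph{Step 4 (convergence of profiles to $\neset$).} The sequence $\boldw(k)$ lies in the compact set $\constraintset$. We argue that it converges to $\neset$ in the sense that $\mathrm{dist}(\boldw(k),\neset)\to 0$. Suppose not. Then some subsequence stays at distance at least $\epsilon$ from $\neset$, and by compactness it has a further subsequence converging to some $\bar\boldw \in \constraintset$.

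We need continuity of $\katzscorestacked(\cdot)$ on $\constraintset$. It holds because $\katzscorestacked(\boldw) = (\identityn-\adjmat)^{-1}\adjmat\boldone$ and $\rho(\adjmat)\leq\|\adjmat\|_\infty \leq \max_i B_i<1$ uniformly on $\constraintset$. Therefore $\katzscorestacked(\bar\boldw) = \katzscorestacked^*$. By Claim~\ref{thm:nash_eqm_set_2} of Theorem~\ref{thm:nash_set_characterization}, this means $\bar\boldw \in \neset$, which contradicts the distance bound.

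The expected obstacle is the upper bound in Step 2, namely monotonicity of $\boldv$ together with the comparison argument. Everything else follows directly from Lemma~\ref{lem:properties_of_better_resp} and Theorem~\ref{thm:nash_set_characterization}.
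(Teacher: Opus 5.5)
Your proof is correct and follows the paper's argument: monotonicity of centralities from Claim~\ref{lem:btr_resp_1} of Lemma~\ref{lem:properties_of_better_resp}, the identity $c_{i_k}(\boldw(k)) = v_{i_k}(\katzscorestacked(\boldw(k)))$ at each update time, continuity of $v_i$, and uniqueness of the fixed point of $\boldv$. Your Steps~3--4 also make explicit two things the paper leaves implicit, namely the per-agent subsequence limit and the compactness argument for $\mathrm{dist}(\boldw(k),\neset)\to 0$. The only real deviation is Step~2, where the paper gets boundedness directly from continuity of $\katzscorestacked(\cdot)$ on the compact set $\constraintset$ (which you prove anyway in Step~4); your comparison $\katzscorestacked(\boldw) \leq \boldv^m(\katzscorestacked(\boldw)) \to \katzscorestacked^*$ is valid and gives the stronger fact that equilibrium centralities dominate those of every feasible profile, so it is not the obstacle you anticipated.
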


\begin{proof}
	Consider any $\boldw(0) \in \constraintset$.
	Under the sequential BRD~\eqref{eq:BRD} with an agent update sequence $\{i_k\}_{k \in \nat}$ in which each agent appears i.o, Lemma~\ref{lem:properties_of_better_resp} implies that, for every $i \in \agt$, the sequence $\{\katzscorei(\boldw(k))\}_{k \in \nat}$, with $\boldw(k)=(\boldwi(k),\boldwnoti(k))$, is monotonically increasing. It is also bounded since $\katzscorei(\cdot)$ is continuous, $\boldw(k) \in \constraintset, \forall k$, and $\constraintset$ is compact. Hence, $\{\katzscorestacked(\boldw(k))\}_{k \in \nat}$ converges to some $\katzscorestacked^* \in \realn_{\geq 0}$. Now, observe that under~\eqref{eq:BRD}, at any time step $k \in \nat$, the agent $i_k$ chooses $\boldwik(k) \in \Mc{BR}_{i_k}(\boldwnotik(k-1))$, and hence, from Lemma~\ref{lem:properties_of_better_resp}, $v_{i_k}(\katzscorestacked(\boldw(k))) = \katzscoreik(\boldw(k))$. This implies that
	$\lim_{k \to \infty} \boldv(\katzscorestacked(\boldw(k))) 
	= \boldv\!\left(\lim_{k \to \infty}\katzscorestacked(\boldw(k))\right) 
	= \boldv(\katzscorestacked^*) 
	= \katzscorestacked^*,$
	where we have used the fact that $\Vi(\cdot),\forall i \in \agt$, as defined in Lemma~\ref{lem:properties_of_better_resp}, is continuous. The claim now follows from Theorem~\ref{thm:nash_set_characterization} and its proof.
\end{proof}

\begin{remark}
	\thmtitle{On the existence of Nash equilibrium network} For any $\boldw \in \constraintset$, let $\Mc{V}_{s}(\boldw):= \{i \in \agt \mid \strictbetterrespseti \neq \varnothing \}$. We can then consider the following modified BRD. Given any initial network $\boldw(0) \in \constraintset$, at any time step $k \in \nat$, choose an agent $i_k \in \Mc{V}_s(\boldw(k-1))$ and restrict the best response to $\boldwik(k) = B_{i_k}\boldej \in \Mc{BR}_{i_{k}}(\boldwnotik(k-1))$, as established by Lemma~\ref{lem:existence_of_best_response}. One can then use the strictly monotone evolution of centralities under the modified BRD, along with the finiteness of possible networks due to the structure of the best responses above, to show convergence of this modified BRD to a Nash equilibrium in finite time. Hence, $\neset \neq \varnothing$. Since this modified BRD is not the focus of this paper, we omit the proof for brevity. \remend
\end{remark}

\section{About the Nash Equilibrium Networks}
In this section, we analyze the properties of Nash equilibrium networks under specific underlying topologies. We begin by considering the case where the underlying topology $\grphunweighted$ is complete.

\begin{theorem}
	\thmtitle{Properties of Nash equilibrium networks when $\grphunweighted$ is complete}
	\label{thm:nash_eqm_properties_complete_underlying_ntw}
	Consider the network formation game $\Mc{G}$ with Katz centralities given in~\eqref{eq:katz_score} and suppose that the underlying topology $\grphunweighted$ is complete. Define $\Mc{H}:= \{\boldw \in \constraintset \mid \forall i \in \agt, \agtsumjniarg{\grphunweighted} \wij = B_i, \wij > 0 \Rightarrow B_j = B_M\}$
	where $B_M := \max_{i \in \agt} B_i$. Then $ \Mc{H} = \neset = \left\{\boldw \in \constraintset \mid
	\katzscorei(\boldw) = \tfrac{B_i}{1-B_M},\forall i \in  \agt\right\}$.
\end{theorem}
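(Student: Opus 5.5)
The plan is to compute the unique equilibrium centrality vector $\katzscorestacked^*$ from Theorem~\ref{thm:nash_set_characterization} explicitly, and then show that both $\Mc{H}$ and $\neset$ coincide with the set of profiles inducing it.

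Throughout I read ``complete'' as $\Ni(\grphunweighted)=\agt$ for every $i$, self-loops included. This reading is needed. Without self-loops, take a unique maximal-budget agent $M$. Its best neighbour then has budget strictly below $B_M$, and the formula $B_M/(1-B_M)$ fails for $M$. I will state this convention at the start of the proof.

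\emph{Step 1 (identify $\katzscorestacked^*$).} Define $\hat c_i := B_i/(1-B_M)$ for each $i \in \agt$. Since $\Ni=\agt$, we have $\max_{j\in\Ni}\hat c_j = B_M/(1-B_M)$. Hence
\[
\Vi(\hat{\katzscorestacked}) = B_i\left[1+\tfrac{B_M}{1-B_M}\right] = \tfrac{B_i}{1-B_M} = \hat c_i .
\]
So $\hat{\katzscorestacked}$ is a fixed point of $\boldv(\cdot)$. By the contraction argument in the proof of Theorem~\ref{thm:nash_set_characterization}, this fixed point is unique, so $\hat{\katzscorestacked}=\katzscorestacked^*$. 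Claim~\ref{thm:nash_eqm_set_2} of that theorem then gives
\[
\neset=\{\boldw\in\constraintset \mid \katzscorei(\boldw)=B_i/(1-B_M),\ \forall i\}.
\]
This is the second equality of the statement.

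\emph{Step 2 ($\neset\subseteq\Mc{H}$).} Take $\boldw\in\neset$, so each $\boldwi$ is a best response. The final assertion of Lemma~\ref{lem:properties_of_better_resp} gives $\agtsumjni \wij=B_i$. It also gives that $\wij>0$ implies $j\in\argmax_{k\in\agt}\katzscorek(\boldw)$. By Step 1, $\katzscorek(\boldw)=B_k/(1-B_M)$, so this argmax is exactly $\{k : B_k=B_M\}$. Therefore $\boldw\in\Mc{H}$.

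\emph{Step 3 ($\Mc{H}\subseteq\neset$).} Take $\boldw\in\Mc{H}$. By Lemma~\ref{lem:interdependence_betwn_katz_scores}, $\katzscorestacked(\boldw)$ solves $\mathbf{c}=\adjmat(\mathbf{1}+\mathbf{c})$. Because $\rho(\adjmat)<1$ under~\ref{stand_asmp:budget_and_discount_factor}, the matrix $\identityn-\adjmat$ is invertible, so this linear system has a unique solution. I will check that $\katzscorestacked^*$ solves it. Every $j$ with $\wij>0$ has $1+c^*_j = 1/(1-B_M)$, and $\sum_j \wij = B_i$, so
\[
\textstyle\sum_j \wij(1+c_j^*) = B_i/(1-B_M)=c^*_i .
\]
Hence $\katzscorestacked(\boldw)=\katzscorestacked^*$, and Step 1 gives $\boldw\in\neset$.

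The only delicate point is the self-loop convention discussed at the start; every other step is a direct substitution into earlier results.
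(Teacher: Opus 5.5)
Your proof is correct, and the route differs from the paper's only in the inclusion $\Mc{H}\subseteq\neset$. Your Steps 1 and 2 match the paper: it also checks that $B_i/(1-B_M)$ is the fixed point of $\boldv(\cdot)$ to get the second equality, and uses the final assertion of Lemma~\ref{lem:properties_of_better_resp} to get $\neset\subseteq\Mc{H}$.

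For $\Mc{H}\subseteq\neset$ the paper computes $\katzscorestacked(\boldw)$ constructively:
\begin{itemize}
\item For $\boldw\in\Mc{H}$, agents in $\agt_M=\{i : B_i=B_M\}$ allocate only inside $\agt_M$, so the induced adjacency matrix satisfies $\adjmat_M\boldone=B_M\boldone$.
\item Summing the Neumann series gives $\katzscorei(\boldw)=B_M/(1-B_M)$ on $\agt_M$.
\item Lemma~\ref{lem:interdependence_betwn_katz_scores} then gives $\katzscorej(\boldw)=B_j/(1-B_M)$ for the remaining agents.
\end{itemize}
You instead guess and verify. You check that $\katzscorestacked^*$ satisfies $\katzscorestacked=\adjmat(\boldone+\katzscorestacked)$ and invoke invertibility of $\identityn-\adjmat$, which follows from $\rho(\adjmat)\le\max_i B_i<1$.

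Your route is shorter and avoids the closure property of $\agt_M$, which the paper only asserts as ``easy to see''. The paper's route tells you more about structure: the maximum-budget agents form a closed sub-network whose centralities are fixed internally, and every other agent inherits its centrality from them.

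Your reading of completeness as $\Ni(\grphunweighted)=\agt$, self-loops included, is exactly the paper's definition. Your observation that the formula fails for a unique maximum-budget agent without self-loops is also correct.
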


\begin{proof}
  Let the underlying topology $\grphunweighted$ be complete i.e., $\Ni(\grphunweighted)=\agt,\forall i \in \agt$. We begin by proving the second equality. From Theorem~\ref{thm:nash_set_characterization} and~\eqref{eq:v_function}, we know that $\katzscorei(\boldw^*)=B_i(1+\max_{j \in \agt}\katzscorej(\boldw^*)), \forall i \in \agt$. From here, we can easily reason that the centralities $\katzscorei(\boldw^*) = \tfrac{B_i}{1-B_M},\forall i \in \agt$ satisfy the above set of equations. We can also verify that $\katzscorei(\boldw^*)=\Vi(\katzscorestacked(\boldw^{*}))$. The result now follows from Lemma~\ref{lem:properties_of_better_resp} and Claim~\ref{thm:nash_eqm_set_2} in Theorem~\ref{thm:nash_set_characterization} and its proof.
	
	In order to prove the first equality, let us define $\agt_M:= \{i \in \agt \mid B_i = B_M\}$. Let $\boldw \in \Mc{H}$ and let $\grph(\boldw)$ be the graph induced by it. It is easy to see that for any $i \in \agt_M$, $\wij > 0$ if and only if $j \in \agt_M$. Let $\adjmat_M$ denote the adjacency matrix of the subgraph of $\grph(\boldw)$ induced~\footnote{The term “induced” is used here in the standard graph-theoretic sense.} by $\agt_M$. Then, $\adjmat_M\boldone=B_M\boldone$ and $\adjmat_M$ is strictly sub-stochastic by virtue of~\ref{stand_asmp:budget_and_discount_factor}. This means that $(\mathbf{I}-\adjmat_M)^{-1}\boldone - \boldone = \tfrac{B_M}{1-B_M}\boldone$ and from~\eqref{eq:katz_score}, $\katzscorei(\boldw) = \tfrac{B_M}{1-B_M},\forall i \in \agt_M$. Thus, for any $j \notin \agt_M$, from Lemma~\ref{lem:interdependence_betwn_katz_scores} we get $\katzscorej(\boldw)=\tfrac{B_j}{1-B_M}$. From the second equality of the claim, we get that $\boldw \in \neset$. Next, suppose $\boldw \in \neset$. Based on the second equality of the claim, it follows that, $\katzscorei(\boldw) = \frac{B_i}{1-B_M}, \forall i \in \agt$. Hence, from Lemma \ref{lem:properties_of_better_resp}, $\agtsumjniarg{\grphunweighted} \wij = B_i$ and for any $i,j \in \agt$ such that $\wij > 0$, we have $\katzscorej(\boldw) = \max_{k \in \agt} \katzscorek(\boldw) = \frac{B_M}{1-B_M}$. Thus, $B_j = B_M$. This implies $\boldw \in \Mc{H}$.
\end{proof}

From Theorem~\ref{thm:nash_eqm_properties_complete_underlying_ntw}, we see that when the underlying topology $\grph$ is complete, then at any Nash equilibrium network $\grph^*$ induced by $\boldw^* \in \neset$, for any two agents $i,j \in \agt$, $\katzscorei(\boldw^*) \leq \katzscorej(\boldw^*)$ if and only if $B_i \leq B_j$. Theorem~\ref{thm:nash_eqm_properties_complete_underlying_ntw} also shows that when the underlying topology is complete, every agent allocates their entire budget to the agent with the maximum budget in a Nash equilibrium network. For example, this means that in the case where there is only one agent with maximum resources, i.e., $|\agt_M|=1$, the Nash equilibrium network corresponds to a \emph{star} network, with every other agent allocating their entire budget to that particular agent, which consequently has the highest centrality.

We now consider general underlying graph topologies, albeit with the following assumption.

\begin{enumerate}[label=\textbf{(A\arabic*)},wide=\parindent] 
	\item\thmtitle{All agents have self-loops in the underlying topology} For any agent $i \in \agt$, $i \in \Ni(\grphunweighted)$. \remend
	\label{asmp:self_loops_in_underlying_ntw}
\end{enumerate}
\begin{lemma}
	\thmtitle{Agents connect only to nodes with centrality at least as high as their own in a Nash equilibrium network with self-loops allowed in $\grphunweighted$}
	\label{lem:neighbors_have_higher_katz_at_nash_eqm}
	Let $\grphunweighted$ be a given underlying topology. Consider the network formation game $\Mc{G}$ with Katz centralities given in~\eqref{eq:katz_score}. Suppose that Assumption~\ref{asmp:self_loops_in_underlying_ntw} holds. Let $\boldw^* \in \neset$ and let $\grph^*:=\grph(\boldw^*)$ denote the Nash equilibrium network induced by it. Then for any $ i \in \agt$, if $j \in \Ni(\grph^*)$ then $\katzscorei(\boldw^*) \leq \katzscorej(\boldw^*)$.
\end{lemma}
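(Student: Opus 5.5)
The plan is to combine Claim~\ref{thm:reln_betwn_nash_eqm_centralities} of Theorem~\ref{thm:nash_set_characterization} with the fixed-point characterization in Claim~\ref{thm:nash_eqm_set_1}. Assumption~\ref{asmp:self_loops_in_underlying_ntw} supplies the extra inequality we need. Throughout, write $c^*_k := \katzscorek(\boldw^*)$.

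Fix $i \in \agt$ and $j \in \Ni(\grph^*)$, so that $w^*_{ij} > 0$. Feasibility forces $j \in \Ni(\grphunweighted)$. Claim~\ref{thm:reln_betwn_nash_eqm_centralities} of Theorem~\ref{thm:nash_set_characterization} then gives
\begin{equation*}
c^*_i = B_i\left[1 + c^*_j\right].
\end{equation*}
Moreover, the final statement of Lemma~\ref{lem:properties_of_better_resp} (applied at the best response $\boldwistar$) says that $j \in \argmax_{k \in \Ni(\grphunweighted)} c^*_k$. Equivalently, $c^*_i = \Vi(\katzscorestacked(\boldw^*))$ with the maximum attained at $j$.

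Now use the self-loop. By Assumption~\ref{asmp:self_loops_in_underlying_ntw}, $i \in \Ni(\grphunweighted)$, so $c^*_i \le \max_{k \in \Ni(\grphunweighted)} c^*_k = c^*_j$. This already yields the claim.

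As a sanity check, and as an alternative route, one can combine the two relations directly: $c^*_i = B_i(1+c^*_j) \ge B_i(1+c^*_i)$, hence $c^*_i \ge B_i/(1-B_i)$. This is consistent with the above but is not needed.

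There is no serious obstacle. The only point needing care is making sure the maximizing-neighbor property from Lemma~\ref{lem:properties_of_better_resp} applies to $\boldwistar$, which holds because $\boldwistar \in \Mc{BR}_i(\boldwnotistar)$ at a Nash equilibrium. It is also worth noting that the case $j = i$ (a self-loop edge in $\grph^*$) is trivial, since then $c^*_i \le c^*_j$ holds with equality.
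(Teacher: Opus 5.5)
Your proof is correct and is essentially the paper's argument. At a Nash equilibrium $\boldwistar \in \Mc{BR}_i(\boldwnotistar)$, so the last part of Lemma~\ref{lem:properties_of_better_resp} puts any $j$ with $w^*_{ij}>0$ in $\argmax_{k \in \Ni(\grphunweighted)} \katzscorek(\boldw^*)$. The self-loop $i \in \Ni(\grphunweighted)$ then gives $\katzscorei(\boldw^*) \le \katzscorej(\boldw^*)$. The appeal to Claim~\ref{thm:reln_betwn_nash_eqm_centralities} and the bound $\katzscorei(\boldw^*) \ge B_i/(1-B_i)$ are superfluous.

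You also correctly take the maximum over $\Ni(\grphunweighted)$, which is what the self-loop step requires. The paper's proof writes $\Ni(\grph^*)$ at that point, and it should be read as $\Ni(\grphunweighted)$.
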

\begin{proof}
Since $\boldw^* \in \neset$, $\boldwi^* \in \Mc{BR}_{i}(\boldwnoti^*), \forall i \in \agt$, by definition. From Lemma~\ref{lem:properties_of_better_resp}, if $j \in \Ni(\grph^*)$, then $j \in \argmax_{k \in \Ni(\grph^*)} \katzscorek(\boldw)$. The claim now follows since $i \in \Ni(\grphunweighted)$ under the stated assumptions. 
\end{proof}

In the following result, we show that agents in an SCC of Nash equilibrium network $\grph^*=\grph(\boldw^*)$ induced by $\boldw^* \in \neset$ have same budget and centralities. 
\begin{theorem}
	\thmtitle{Agents belonging to a SCC of a Nash equilibrium network with self-loops in $\grphunweighted$ have same budget and centralities}
	\label{thm:agents_in_scc_have_same_centrality_and_budgets}
	Let $\grphunweighted$ be a given underlying topology. Consider the network formation game $\Mc{G}$ with Katz centralities given in~\eqref{eq:katz_score}. Suppose that Assumption~\ref{asmp:self_loops_in_underlying_ntw} holds. Let $\boldw^* \in \neset$, and let $\grph^*:=\grph(\boldw^*)$ denote the Nash equilibrium network induced by it. If $\grph^\prime :=(\agt^\prime,\edg^\prime,\adjmat^\prime)$ is an SCC of $\grph^*$, then $\exists \alpha \geq 0$ and $\exists \gamma>0$ such that $\katzscorei(\boldw^*)=\alpha,$ $B_i=\gamma$, $\forall i \in \agt^\prime$. Additionally, if $|\agt^\prime|\geq 2$ and $\grph^{\prime\prime}:=(\agt^{\prime\prime}, \edg^{\prime\prime}, \adjmat^{\prime\prime})$ is another SCC of $\grph^*$ such that $\wij^*>0$ for some $i \in \agt^\prime$ and $j \in \agt^{\prime\prime}$, then $\katzscorek(\boldw^*)=\alpha, \forall k \in \agt^{\prime\prime}$.
\end{theorem}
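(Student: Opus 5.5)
Within the SCC $\grph^\prime$, centralities are equal by a cycle argument. By Lemma~\ref{lem:neighbors_have_higher_katz_at_nash_eqm}, every edge $(i,j)$ of $\grph^*$ satisfies $\katzscorei(\boldw^*)\le\katzscorej(\boldw^*)$. For any $i,j\in\agt^\prime$ there are directed walks $i\mapsto\cdots\mapsto j$ and $j\mapsto\cdots\mapsto i$ in $\grph^\prime\subseteq\grph^*$. Chaining the inequalities along both walks gives $\katzscorei(\boldw^*)\le\katzscorej(\boldw^*)\le\katzscorei(\boldw^*)$. Hence all agents in $\agt^\prime$ share a common value $\alpha\ge 0$.

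Equal budgets come from Claim~\ref{thm:reln_betwn_nash_eqm_centralities} of Theorem~\ref{thm:nash_set_characterization}. If $|\agt^\prime|=1$, set $\gamma=B_i$ and there is nothing to prove. If $|\agt^\prime|\ge2$, each $i\in\agt^\prime$ has an out-edge $(i,j)$ with $j\in\agt^\prime$, $j\neq i$, because $i$ must reach the other members. Since $w^*_{ij}>0$, the claim gives $\alpha=B_i(1+\alpha)$, so $B_i=\alpha/(1+\alpha)$, which is independent of $i$. Taking $\gamma:=\alpha/(1+\alpha)$ completes this part. Positivity of $\gamma$ follows from $B_i>0$; equivalently $\alpha>0$, since $\alpha=B_i(1+\alpha)\ge B_i>0$.

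For the second part, suppose $|\agt^\prime|\ge2$ and $w^*_{ij}>0$ with $i\in\agt^\prime$, $j\in\agt^{\prime\prime}$. Agent $i$ also has an edge to some $j^\prime\in\agt^\prime$, and by the above $\katzscorei(\boldw^*)=B_i(1+\alpha)=\alpha$. Applying Claim~\ref{thm:reln_betwn_nash_eqm_centralities} to the edge $(i,j)$ gives $\katzscorei(\boldw^*)=B_i(1+\katzscorej(\boldw^*))$. Comparing the two expressions and using $B_i>0$ yields $\katzscorej(\boldw^*)=\alpha$. By the first part, every $k\in\agt^{\prime\prime}$ has the same centrality as $j$, so $\katzscorek(\boldw^*)=\alpha$.

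The main subtlety is making sure each node of a nontrivial SCC has a positive-weight edge to another node of the same SCC, so that Claim~\ref{thm:reln_betwn_nash_eqm_centralities} applies. This follows directly from strong connectivity with at least two nodes. Self-loops are harmless, since they also satisfy the relation $\katzscorei(\boldw^*)=B_i(1+\katzscorei(\boldw^*))$, consistent with the value $\alpha$.
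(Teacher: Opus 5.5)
Your proof is correct and follows the paper's route. You chain the inequalities of Lemma~\ref{lem:neighbors_have_higher_katz_at_nash_eqm} along walks in both directions inside the SCC to equalize centralities, then apply Claim~\ref{thm:reln_betwn_nash_eqm_centralities} of Theorem~\ref{thm:nash_set_characterization} along an intra-SCC edge to get $B_i=\alpha/(1+\alpha)$ and along the cross edge to get $\katzscorej(\boldw^*)=\alpha$. Working with walks instead of the paper's single ``cycle'' through all of $\agt^\prime$ (which need not be simple), and handling singleton SCCs explicitly, makes your version slightly more careful than the paper's sketch.
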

\begin{proof}
	Under the stated assumptions, notice that for any agent $i \in \grph^\prime$, there exists a directed cycle starting at $i$ and terminating at $i$ while passing through every node in $\agt^\prime$. First part of claim now follows from Lemma~\ref{lem:neighbors_have_higher_katz_at_nash_eqm} and Theorem~\ref{thm:nash_set_characterization}. Next, if there is an edge from $\grph^{\prime}$ to $\grph^{\prime\prime}$, then $\exists i \in \agt^\prime$ and $\exists j \in \agt^{\prime\prime}$ such that $\wij^* >0$. Since $|\agt^\prime|\geq 2$, there is an edge from $i$ to an agent $k \in \agt^\prime$ i.e., $\wik^* >0$. The result now follows from the first part, Lemma~\ref{lem:properties_of_better_resp} and Theorem~\ref{thm:nash_set_characterization}.
\end{proof}

Lemma~\ref{lem:neighbors_have_higher_katz_at_nash_eqm} states that in a Nash equilibrium network, any agent forms connections only with agents whose centralities are greater than or equal to its equilibrium centrality. Thus, a hierarchical structure emerges in the Nash equilibrium network with respect to agents' Katz centralities. Further, Theorem~\ref{thm:agents_in_scc_have_same_centrality_and_budgets} states that all agents within a SCC of the Nash equilibrium network have the same budget and centrality. Moreover, if a SCC with at least two agents is connected to another SCC, then all agents in both SCCs have the same budget and centrality. 

Next, we consider another special case of an undirected underlying topology. The following result characterizes the centralities of agents in a cycle (if one exists) of the Nash equilibrium network formed in this case. The proof, being similar to that of Theorem~\ref{thm:agents_in_scc_have_same_centrality_and_budgets}, is omitted for brevity.

\begin{theorem}
	\thmtitle{Properties of cycles in the Nash equilibrium network when $\grphunweighted$ is undirected}
	Let $\grphunweighted$ be a given underlying topology. Consider the network formation game $\Mc{G}$ with Katz centralities given in~\eqref{eq:katz_score}. Suppose that $\grphunweighted$ is undirected. Let $\boldw^* \in \neset$, and let $\grph^* := \grph(\boldw^*)$ denote the Nash equilibrium network induced by it. Consider any cycle (if there exists one) of length $l$ in $\grph^*$. Then, if $l$ is odd, all agents in the cycle have the same budgets and centralities. Alternatively, if $l$ is even, every pair of alternate agents in the cycle has the same budgets and centralities. \remend
\end{theorem}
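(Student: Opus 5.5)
The plan is to combine the best-response structure from Lemma~\ref{lem:properties_of_better_resp} and Claim~\ref{thm:reln_betwn_nash_eqm_centralities} of Theorem~\ref{thm:nash_set_characterization} with the symmetry of an undirected $\grphunweighted$. This should give a ``two-step'' monotonicity of centralities along any directed cycle of $\grph^*$, which then forces equalities when we go around the cycle. Write $c_k := \katzscorek(\boldw^*)$ and let the cycle be $i_1 \mapsto i_2 \mapsto \cdots \mapsto i_l \mapsto i_1$ in $\grph^*$. Indices are taken modulo $l$, so $w^*_{i_s i_{s+1}} > 0$ for all $s$.

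\emph{Step 1 (two-step monotonicity).} Since $\boldw^* \in \neset$, each $\boldw^*_{i_{s+1}}$ is a best response. So by the last part of Lemma~\ref{lem:properties_of_better_resp}, $w^*_{i_{s+1} i_{s+2}}>0$ gives
\[
c_{i_{s+2}} = \max_{k \in \mathcal{N}_{i_{s+1}}(\grphunweighted)} c_k .
\]
It is important here that the argmax is taken over out-neighbors in $\grphunweighted$, not in $\grph^*$. Because $w^*_{i_s i_{s+1}}>0$ and the allocation is feasible, $i_{s+1} \in \mathcal{N}_{i_s}(\grphunweighted)$. Since $\grphunweighted$ is undirected, this yields $i_s \in \mathcal{N}_{i_{s+1}}(\grphunweighted)$. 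Hence $c_{i_{s+2}} \geq c_{i_s}$ for every $s$, cyclically.

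\emph{Step 2 (closing the chain).} Follow the chain $c_{i_s} \le c_{i_{s+2}} \le c_{i_{s+4}} \le \cdots$. The indices $s, s+2, s+4, \ldots$ modulo $l$ form an orbit under $s \mapsto s+2$, and the chain returns to $c_{i_s}$, so every inequality along the orbit is an equality. If $l$ is odd, $2$ generates $\mathbb{Z}_l$, so the single orbit is the whole cycle and all $c_{i_s}$ equal some $\alpha$. If $l$ is even, there are two orbits (even and odd positions), and centralities are constant on each. The degenerate cases are immediate: $l=1$ is a self-loop, and for $l=2$ the ``alternate'' pairs consist of a node with itself.

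\emph{Step 3 (budgets).} By Claim~\ref{thm:reln_betwn_nash_eqm_centralities} of Theorem~\ref{thm:nash_set_characterization}, $c_{i_s} = B_{i_s}(1 + c_{i_{s+1}})$, so $B_{i_s} = c_{i_s}/(1+c_{i_{s+1}})$ with a positive denominator. For $l$ odd, all centralities equal $\alpha$, which gives $B_{i_s} = \alpha/(1+\alpha)$ for all $s$. For $l$ even, take $i_s$ and $i_{s+2}$. They have equal centralities, and their successors $i_{s+1}, i_{s+3}$ lie in the same parity class, so they also have equal centralities. Hence $B_{i_s} = B_{i_{s+2}}$.

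The main point requiring care is Step 1, namely the observation that the undirectedness of $\grphunweighted$ makes the predecessor $i_s$ an admissible target for $i_{s+1}$. This is what replaces the self-loop Assumption~\ref{asmp:self_loops_in_underlying_ntw} used in Lemma~\ref{lem:neighbors_have_higher_katz_at_nash_eqm}. The rest is a parity argument on the cyclic group.
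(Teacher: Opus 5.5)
Your proof is correct and is essentially the argument the paper has in mind; the paper omits the proof as being similar to that of Theorem~\ref{thm:agents_in_scc_have_same_centrality_and_budgets}. Undirectedness of $\grphunweighted$ makes the predecessor $i_s$ an admissible target for $i_{s+1}$, which gives the two-step monotonicity $c_{i_{s+2}} \geq c_{i_s}$ in place of the one-step monotonicity that the self-loop assumption provides, and then closing the chain around the cycle and using $c_{i_s} = B_{i_s}(1 + c_{i_{s+1}})$ yields the parity-dependent equalities exactly as you describe, including your correct care that the argmax in Lemma~\ref{lem:properties_of_better_resp} is over out-neighbours in $\grphunweighted$ rather than in $\grph^*$.
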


\section{Simulations}

In this section, we present simulations to illustrate some of our analytical results. All simulations were run in \textsc{MATLAB}. We consider 10 agents with their Katz centralities as defined in~\eqref{eq:katz_score}. In the first set of simulations, we simulate BRD with random agent selection at every time step. The underlying topology $\grphunweighted$ is as shown in Figure~\ref{fig:self_loops} and contain self-loops at each node. Thus, Assumption~\ref{asmp:self_loops_in_underlying_ntw} holds. The budget vector containing $B_i$'s is $B \approx \begin{bmatrix}
	0.89\;\boldone_{3}^\top 
	& 0.17\;\boldone_{3}^\top 
	& 0.3 \;\boldone_{3}^\top
	& 0.86
\end{bmatrix}^\top$. The non-zero allocation values at $\boldw^*\in \neset$ attained under BRD are as follows: $w_{11}^*\approx 0.17$, $w_{12}^*\approx0.71$, $w_{22}^*\approx0.1$, $w_{23}^*\approx0.79$, $w_{31}^*\approx0.81$, $w_{33}^*\approx0.07$, $w_{42}^*\approx0.09$, $w_{43}^*\approx0.08$, $w_{59}^*\approx0.17$, 
$w_{62}^*\approx0.09$, $w_{63}^*\approx0.09$
$w_{73}^*\approx0.3$, $w_{82}^*\approx0.3$,
$w_{91}^*\approx0.3$, $w_{10,1}^*\approx0.62$, $w_{10,3}^*\approx0.24$. The corresponding Nash equilibrium network $\grph(\boldw^*)$ is shown in Figure~\ref{fig:self_loops}. The above data can be used to verify claims in Lemma~\ref{lem:properties_of_better_resp}. The evolution of centralities under BRD as shown in Figure~\ref{fig:evolution_of_centralities} is monotonic verifying claim~\ref{lem:btr_resp_1} in Lemma~\ref{lem:properties_of_better_resp}. The equilbrium centrality vector is $\katzscorestacked(\boldw^*) \approx \begin{bmatrix}
	7.82\;\boldone_{3}^\top 
	& 1.51
	& 0.62
	& 1.51
	& 2.64 \;\boldone_{3}^\top
	& 7.57
\end{bmatrix}^\top$. Here, $\boldone_3$ denotes a 3 dimesnional vector of all ones. Using the above data, it can be verified that $\boldv(\katzscorestacked(\boldw^*))=\katzscorestacked(\boldw^*)$ thus $\katzscorestacked(\boldw^*)$ is a fixed point of $\boldv(\cdot)$~\eqref{eq:v_function} verifying that $\boldw^* \in \neset$ as suggested by Theorem~\ref{thm:nash_set_characterization}. The data also verifies the claims in Theorem~\ref{thm:agents_in_scc_have_same_centrality_and_budgets}.
\begin{figure}[h]
	\centering
	\begin{tabular}{ccc}
		\includegraphics[trim = 2.2in 1.0in 2.5in 0.1in, scale=0.16]{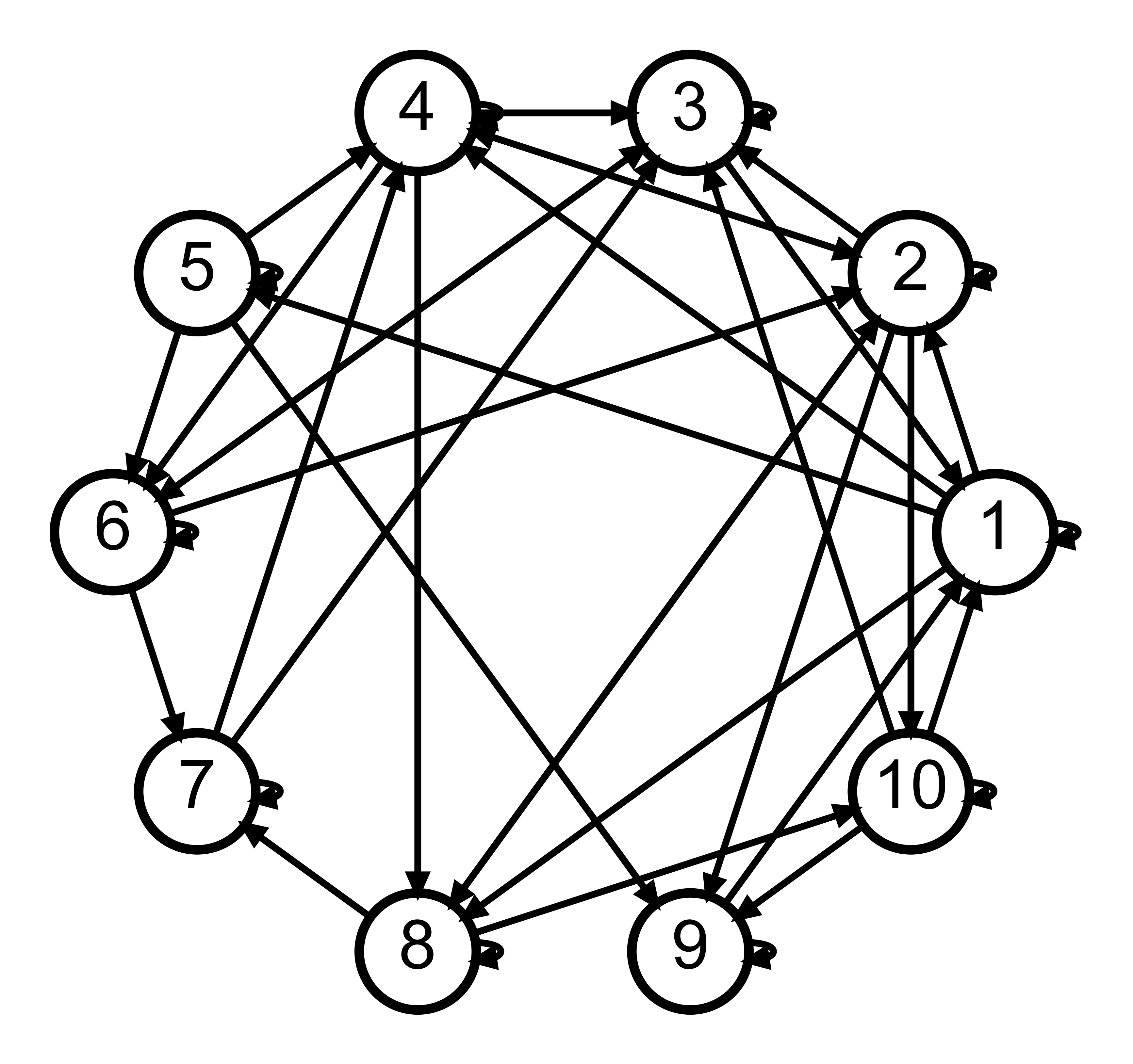} &
		\hspace{0.7in} 
		\includegraphics[trim = 4.2in 0.2in 3.5in 1.0in, scale=0.15]{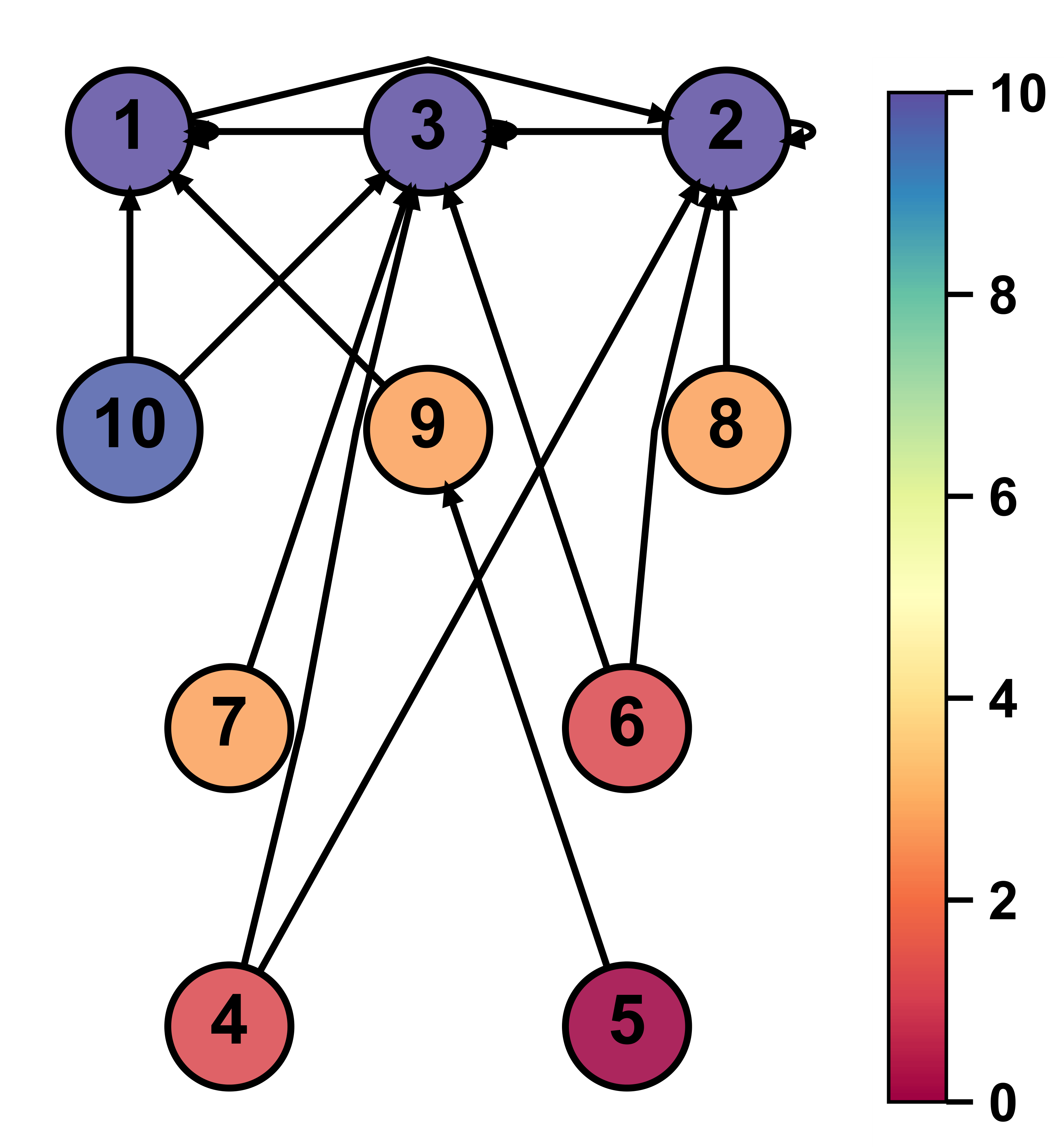} &
	\end{tabular}
	\caption{\hspace{-6pt}Convergence of BRD. (Left) Underlying topology with self-loops at every node. (Right) Unweighted Nash equilibrium network attained by BRD with self loops at nodes $\{1,2,3\}$. The color bar represents the continuum of centrality values in $[0,10]$.}
	\label{fig:self_loops}
\end{figure} 

\begin{figure}[t]
	\vspace{17pt}
	\centering
	\includegraphics[trim = 4in 0.4in 4.2in 0.4in, scale=0.3]{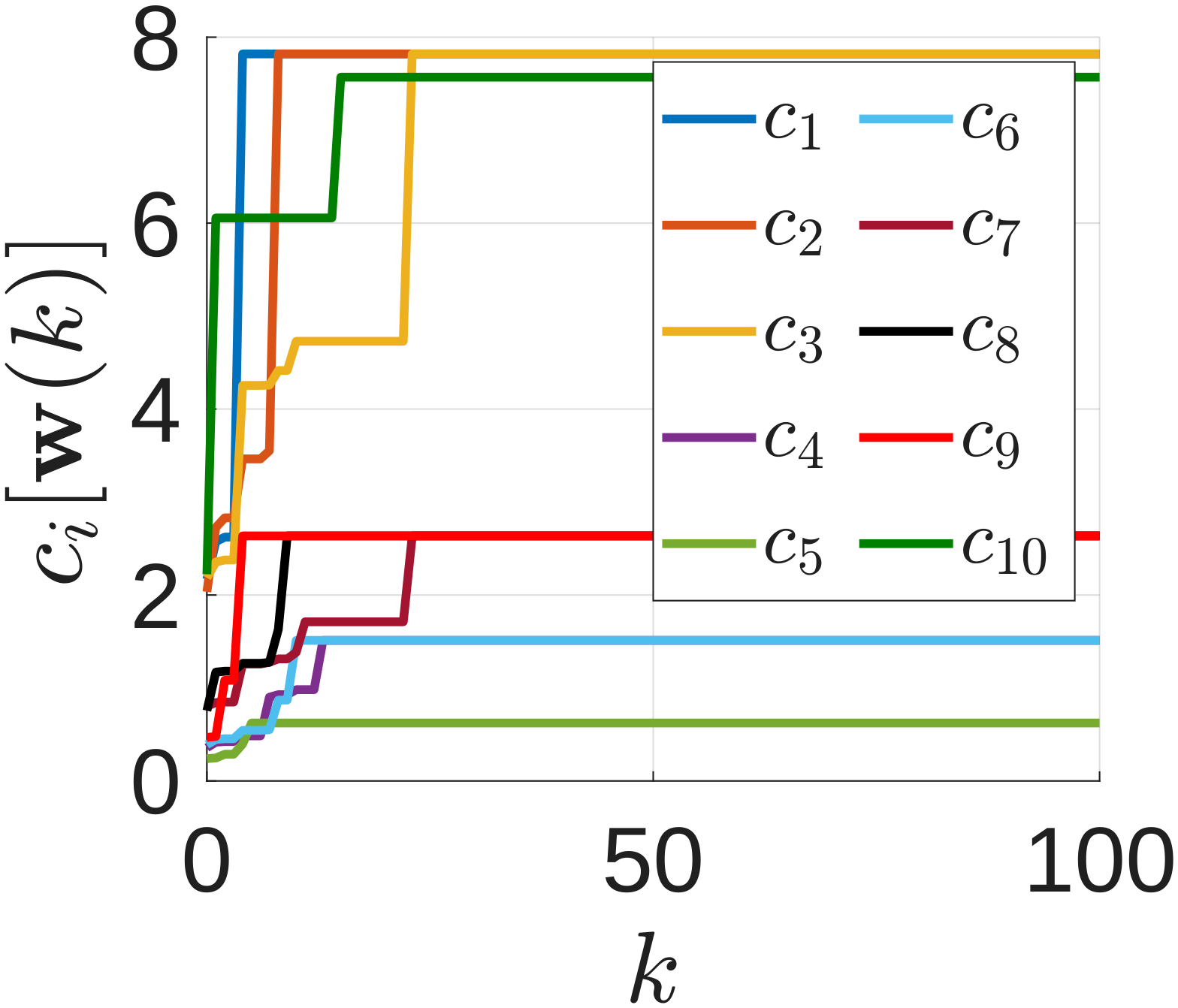}
	\caption{Evolution of $c_{i}$'s under BRD with $\grphunweighted$ as shown in Figure~\ref{fig:self_loops}.}
	\label{fig:evolution_of_centralities}
\end{figure}
 
In the second set of simulations shown in Figure, we assume the underlying topology $\grphunweighted$ to be complete. We again simulate BRD with random agent selection at each time step. Figure~\ref{fig:complete_underlying_topology} shows the (unweighted) Nash equilibrium topology attained under BRD. We omit the allocation data $\boldw^*$ due to space constraints. The vector containing the budgets of all agents is $B \approx \begin{bmatrix}
	0.2\;\boldone_{3}^\top 
	& 0.83\;\boldone_{3}^\top 
	& 0.69 \;\boldone_{3}^\top
	& 0.17
\end{bmatrix}^\top$. The Nash equilbrium centrality vector is $\katzscorestacked(\boldw^*) \approx \begin{bmatrix}
1.15\;\boldone_{3}^\top 
& 4.77\;\boldone_{3}^\top 
& 3.98\;\boldone_{3}^\top
& 0.98
\end{bmatrix}^\top$. The above data and the plot on the right side of Figure~\ref{fig:complete_underlying_topology} verify the claims in Theorem~\ref{thm:nash_eqm_properties_complete_underlying_ntw} and the observation made below it.
\begin{figure}[h]
	\centering
	\begin{tabular}{ccc}
		\includegraphics[trim =2.1in 0.1in 5.5in 0.1in, scale=0.25]{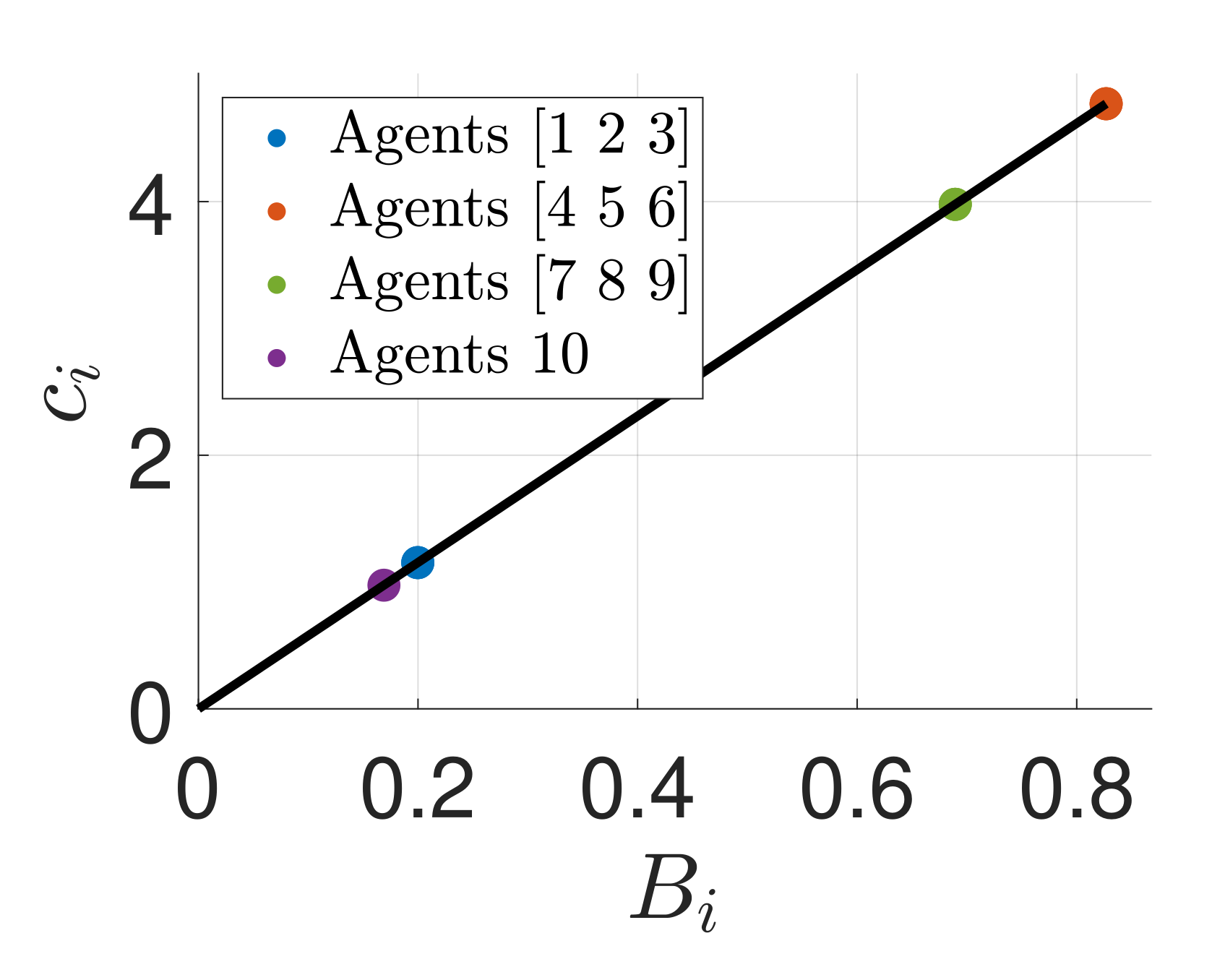} &
		\hspace{1.2in} 
		\includegraphics[trim = 0in 0.2in 3.9in 0.2in, scale=0.1]{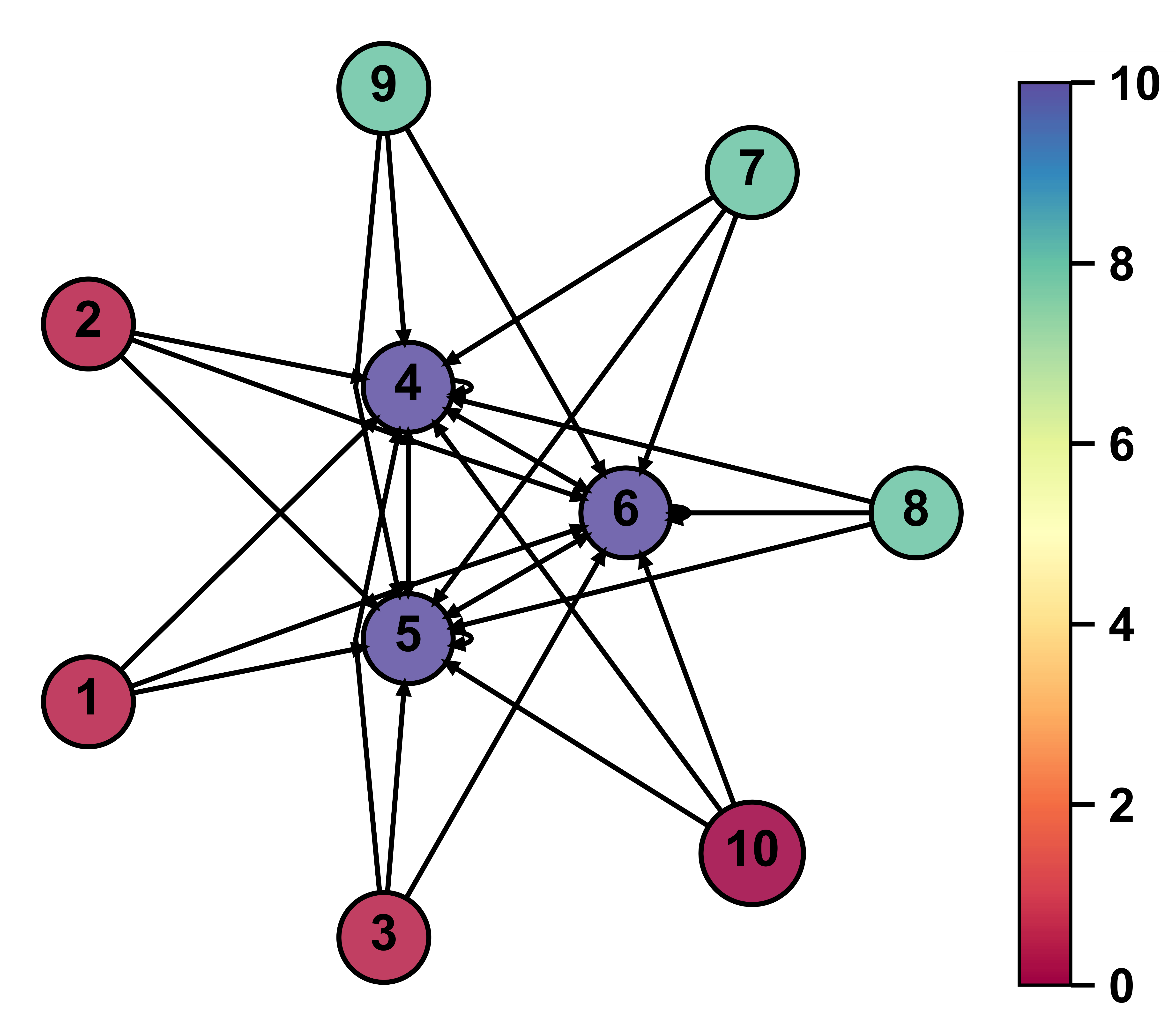} &
	\end{tabular}

	\caption{\hspace{-6pt} When $\grphunweighted$ is complete. (Left) Plot of centralities vs agents' budgets. (Right) Unweighted Nash equilibrium network reached under BRD. Self loops exist at nodes $\{4,5,6\}$. The color bar represents the continuum of centrality values in $[0,10]$.}
	\label{fig:complete_underlying_topology}
\end{figure} 

\section{Conclusion}

In this paper, we have studied a strategic network formation game where agents form directed weighted networks to maximize their Katz centrality. We have provided necessary and sufficient conditions for a network to be a Nash equilibrium and have characterized the set of Nash equilibrium networks under special underlying topologies. We have also shown that unilateral better responses at Nash equilibria are still Nash equilibria. We have shown that sequential best response dynamics converge to the set of Nash equilibria. Finally, we have provided simulation results to verify our theoretical findings. Future work includes further analysis of the Nash equilibrium networks for different classes of underlying topologies and budget constraints, modeling and analysis of bounded rationality in the network formation game and network formation process in terms of limited information and computational limitations of the agents.

\bibliographystyle{IEEEtran}
\bibliography{references}

\end{document}